\renewcommand{\paragraph}{\roman{paragraph}}
\renewcommand\title[1]{\gdef\@title{\reset@font\Large\bfseries #1}}
\renewcommand\section{\@startsection {section}{1}{\z@}%
                                   {-3.5ex \@plus -1ex \@minus -.2ex}%
                                   {2.3ex \@plus.2ex}%
                                   {\normalfont\large\bfseries}}
\renewcommand\subsection{\@startsection{subsection}{2}{\z@}%
                                     {-3ex\@plus -1ex \@minus -.2ex}%
                                     {1.5ex \@plus .2ex}%
                                     {\normalfont\normalsize\bfseries}}
\renewcommand\subsubsection{\@startsection{subsubsection}{3}{\z@}%
                                     {-2.5ex\@plus -1ex \@minus -.2ex}%
                                     {1.5ex \@plus .2ex}%
                                     {\normalfont\normalsize\bfseries}}
\def\@runningauthor{}\newcommand{\runningauthor}[1]{\def\runningauthor{#1}}
\def\@runningtitle{}\newcommand{\runningtitle}[1]{\def\runningtitle{#1}}
\renewcommand{\ps@plain}{%
\renewcommand{\@evenhead}{\footnotesize\scshape \hfill\runningauthor\hfill}
\renewcommand{\@oddhead}{\footnotesize\scshape \hfill\runningtitle\hfill}}
\g@addto@macro\bfseries{\boldmath}
\theoremstyle{plain}
\newtheorem{theorem}{Theorem}[section]
\newtheorem{lemma}[theorem]{Lemma}
\theoremstyle{definition}
\newtheorem{definition}[theorem]{Definition}
\newtheorem{example}[theorem]{Example}
\newtheorem{problem}[theorem]{Problem}
\theoremstyle{remark}
\newtheorem{remark}[theorem]{Remark}
\runningauthor{}
\date{}
\begin{document}

\title{New characterization of full weight spectrum one-orbit cyclic subspace codes\thanks{This research is supported by the National Natural Science Foundation of China under Grant 12471490. {\em (Corresponding author: Minjia Shi.)}}}
\author{Minjia Shi\thanks{Minjia Shi is with the Key Laboratory of Intelligent Computing and Signal Processing, Ministry of Education, School of Mathematical Sciences, Anhui University, Hefei, China (email: smjwcl.good@163.com).},~
Wenhao Song\thanks{Wenhao Song is with the School of Computer Science and Technology, Anhui University, Hefei, China (email: s1114967601@gmail.com).}}

\date{}
    \maketitle

\begin{abstract}
Castello $\textit{et al}$. [J. Comb. Theory Ser. A, 212, 106005 (2025)] provided a complete classification for full weight spectrum (FWS) one-orbit cyclic subspace codes. In this paper, we determine the weight distributions of a family of FWS codes and exhibit some equivalence classes of FWS codes under certain conditions. Furthermore, we provide a complete classification for $r$-FWS codes.
\end{abstract}
{\bf Keywords:} Cyclic subspace code, full weight spectrum code, $r$-full weight spectrum code \\
\noindent{\bf Mathematics Subject Classification}

\section{Introduction}\label{pre}

Subspace codes, introduced by Koetter and Kschischang \cite{random}, are collections of subspaces of a finite vector space \(\mathbb{F}_q^n\), designed for error control in random network coding. In contrast to classical linear codes under the Hamming metric, subspace codes are measured by the \textit{subspace distance}.
Let $k,\ n \in \mathbb{Z}$ satisfy $0 \leq k \leq n$. The {\em Grassmannian space} over $\mathbb{F}_q$, denoted by $\mathcal{G}_q(n,k) \subseteq \mathcal{P}_q(n)$, consists of all $k$-dimensional $\mathbb{F}_q$-subspaces of the vector space $\mathbb{F}_q^n$, where $\mathcal{P}_q(n)$ represents the collection of all $\mathbb{F}_q$-subspaces of $\mathbb{F}_q^n$. A \emph{constant dimension subspace code} refers to any subset $\mathcal{C} \subseteq \mathcal{G}_q(n,k)$ equipped with the subspace metric defined by
\[
d(U,V) = 2k - 2\dim_{\mathbb{F}_q}(U \cap V) \quad  \ U,\ V \in \mathcal{C}.
\]
The {\em minimum distance} of such a code is given by
\[
d(\mathcal{C}) = \min\{ d(U,V) \mid U,V \in \mathcal{C},\ U \neq V \}.
\]
This distance captures the packet loss and mixing phenomena typical in network transmissions, making subspace codes a more natural framework in such contexts.

Among subspace codes, constant-dimension codes (CDCs), in which all codewords are \(k\)-dimensional subspaces, are of particular interest due to their strong algebraic and geometric properties. 
A prominent family of CDCs is that of \textit{cyclic orbit codes}. Specifically, a code $\mathcal{C} \subseteq \mathcal{G}_q(n,k)$ is called \emph{cyclic} if it satisfies the closure property: $\alpha V \in \mathcal{C}$ for all $\alpha \in \mathbb{F}_{q^n}^*$ and $V \in \mathcal{C}$. When $\mathcal{C}$ coincides with a single orbit $\text{Orb}(S) = \{ \alpha S \mid \alpha \in \mathbb{F}_{q^n}^* \}$ for some subspace $S \subseteq \mathbb{F}_{q^n}$, it is termed a \emph{one-orbit cyclic subspace code} with $S$ as its representative. 
Similar to the classical Hamming case, the \textbf{(distance) weight distribution} of \( \mathcal{C} \) is defined by \( (\omega_2( \mathcal{C}), \ldots, \omega_{2k}( \mathcal{C})) \) (see \cite{distance}), where
	\[
	\omega_{2i}( \mathcal{C}) = \left| \left\{ \alpha S \in \text{Orb}(S) : \alpha \in \mathbb{F}_{q^n}^* , d(S, \alpha S) = 2i \right\} \right| 
	\]
	for \( i \in \{1, \ldots, k\} \).
Note that the definition of weight distribution of $ \mathcal{C} = \text{Orb(S)}$ does not depend on the
choice of $S$.

To formalize our goals, let us define the quantity \(\mathcal{L}(n,k,q)\) to be the maximum number of distinct nonzero weights attainable by one-orbit cyclic subspace codes in \(\mathcal{G}_q(n,k)\). This number equivalently captures the maximal number of distinct pairwise distances in such codes. For a one-orbit code \(\mathcal{C} = \text{Orb}(S)\), the weight spectrum is determined by the intersection dimensions \(\dim_{\mathbb{F}_q}(S \cap \alpha S)\) as \(\alpha\) ranges over \(\mathbb{F}_{q^n}^*\). In particular, the distance values satisfy
\[
2 \leq d(S, \alpha S) \leq 2k,
\]
for all \(\alpha \in \mathbb{F}_{q^n}^*\) with \(S \neq \alpha S\), and since distances are always even integers, it follows that
\[
\mathcal{L}(n, k, q) \leq k.
\]
In analogy with the Hamming case, we call a code with exactly \(k\) distinct nonzero distances a \textit{full weight spectrum} (FWS) code. Generally, a code is an $r$-FWS code if the last $r$ entries of the weight distribution are zeroes and all the others are nonzero. In particular, $0$-FWS codes correspond to FWS codes.

The theory of cyclic subspace codes, particularly motivated by applications in random network coding, was fundamentally shaped by the pioneering work of Etzion and Vardy \cite{projective}, whose groundbreaking projective space framework first revealed the cyclic orbit structure and formalized error correction in projective space. Their framework not only established essential bounds and constructions, but also highlighted the combinatorial richness of the Grassmannian space, sparking a wide range of subsequent research. Among these developments, cyclic orbit codes have emerged as a structured and algebraically elegant class of subspace codes, attracting more attention in recent years \cite{cyclic3, cyclic4}. 

To further enhance the algebraic understanding and construction of cyclic orbit codes, researchers have introduced new tools rooted in linearized polynomial algebra. In particular, \textit{subspace polynomials}, as proposed by Etzion et al.~\cite{subpoly} offer a powerful representation of subspaces in the Grassmannian space by associating them with special classes of linearized polynomials. This framework not only unifies various orbit code constructions, but also facilitates explicit analysis of parameters such as orbit lengths and minimum distances. Building on this approach, Zhao and Tang \cite{cyclic4} extended the constructions to broader classes using generalized subspace polynomials, yielding codes with size up to \(\frac{q^N - 1}{q - 1}\) and minimum distance \(2k - 2\). Trautmann et al.~\cite{equivalence} provided a classification of cyclic orbit codes and a decoding procedure for specific subclasses. Recently, Gluesing-Luerssen and Lehmann \cite{distance} studied the weight distribution of cyclic orbit codes, highlighting its potential as a tool for further classification. 

Beyond their structural elegance, subspace codes, and particularly orbit codes, have found impactful applications in \textit{vector network coding}. Etzion and Wachter-Zeh \cite{etzion2016vector} demonstrated that vector network coding, utilizing subspace codes derived from rank-metric constructions, can substantially reduce the required field size in multicast networks compared to traditional scalar linear schemes. Their work not only provides a profound theoretical breakthrough in bridging coding theory and network coding but also offers practical coding constructions that pave the way for more efficient and scalable network communication protocols.

Recently, Castello et al. \cite{jcta} studied the interplay between subspace codes and FWS codes by focusing on one-orbit cyclic subspace codes with minimum subspace distance two. They provided a complete classification of such codes by identifying two infinite families of cyclic orbit codes constructed via polynomial bases as follows. 

\begin{theorem}\cite[Theorem 1.2]{jcta}\label{Th1.2}
	Let $\mathcal{C}$ be a one-orbit cyclic orbit code in $\mathcal{G}_q(n, k)$. Then $\mathcal{C}$ is a full weight spectrum code if and only if $\mathcal{C} = \operatorname{Orb}(S)$, where $S$ is one of the following:
	
	\begin{enumerate}
		\item[$(1)$] $S = \left\langle 1, \lambda, \ldots, \lambda^{k-1} \right\rangle _{\mathbb{F}_q}$ for some $\lambda \in \mathbb{F}_{q^n} \setminus \mathbb{F}_q$, where
		\[
		k \leq \begin{cases}
			\frac{ \left[ \mathbb{F}_q(\lambda)^2 :~\mathbb{F}_q \right] + 1 }{2}& \text{if } \dim_{\mathbb{F}_q} (\mathbb{F}_q(\lambda)) < n, \\
			\frac{n}{2} & \text{if } \dim_{\mathbb{F}_q} (\mathbb{F}_q(\lambda)) = n,
		\end{cases}
		\]
		
		\item[$(2)$] $S = \left\langle 1, \lambda, \ldots, \lambda^{l-1} \right\rangle_{\mathbb{F}_{q^2}} \oplus \lambda^l \mathbb{F}_q$ for some $\lambda \in \mathbb{F}_{q^n} \setminus \mathbb{F}_{q^2}$, where $k = 2l + 1$, $n$ is even, and $l < \frac{[ \mathbb{F}_{q^2}(\lambda) :~\mathbb{F}_{q^2} ]}{2}$.
	\end{enumerate}

\end{theorem}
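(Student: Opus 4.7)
The plan is to prove both directions separately. For sufficiency, I would verify by a direct intersection computation that the two given families of subspaces achieve all $k$ distinct nonzero weights. For necessity, I would use the structural constraints imposed by the FWS property, notably a chain of $\alpha$-translates of $S$ realizing every intermediate intersection dimension, to force $S$ into one of the two exhibited shapes. Throughout, the best friend of $S$ (the largest subfield $\mathbb{F}_{q^t} \subseteq \mathbb{F}_{q^n}$ with $\mathbb{F}_{q^t} \cdot S = S$) is the invariant that organizes the case split.

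For sufficiency in case (1), with $S = \langle 1, \lambda, \ldots, \lambda^{k-1} \rangle_{\mathbb{F}_q}$, I would show that for each $j \in \{0, 1, \ldots, k-1\}$,
\[
S \cap \lambda^{j} S = \langle \lambda^{j}, \lambda^{j+1}, \ldots, \lambda^{k-1} \rangle_{\mathbb{F}_q},
\]
which has dimension $k - j$. The inclusion $\supseteq$ is immediate, and the reverse inclusion reduces to the $\mathbb{F}_q$-linear independence of $1, \lambda, \ldots, \lambda^{2k-2}$, which is precisely what the bound on $k$ supplies in each of the two sub-cases. This realizes weight $2j$ for every $j \in \{1, \ldots, k\}$, giving FWS. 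For case (2), I would run an analogous computation inside the $\mathbb{F}_{q^2}$-module $\langle 1, \lambda, \ldots, \lambda^{l-1}\rangle_{\mathbb{F}_{q^2}}$, with the extra summand $\lambda^l \mathbb{F}_q$ supplying the single $\mathbb{F}_q$-dimension needed to produce weights of the opposite parity to $k = 2l+1$. The bound $l < [\mathbb{F}_{q^2}(\lambda):\mathbb{F}_{q^2}]/2$ plays exactly the same independence role that the bound on $k$ plays in case (1).

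For necessity, my plan is the following. The FWS hypothesis yields, for each $i \in \{1, \ldots, k\}$, some $\alpha_i \in \mathbb{F}_{q^n}^{\ast}$ with $\dim_{\mathbb{F}_q}(S \cap \alpha_i S) = k - i$. Fixing $\alpha := \alpha_1$, the translate $\alpha S$ meets $S$ in a hyperplane of $S$, and I would study the descending chain
\[
S, \; S \cap \alpha S, \; S \cap \alpha S \cap \alpha^{2} S, \; \ldots
\]
together with its ascending dual. Showing inductively that consecutive powers of $\alpha$ can be arranged so that the chain drops by exactly one dimension at each step rigidifies $S$ to a subspace generated by consecutive powers of $\alpha$ over its best friend; when the best friend is $\mathbb{F}_q$, this yields case (1) with $\lambda = \alpha$. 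When no single $\alpha$ achieves this descent all the way, but all $k$ weights are still realized, I would argue via a parity dichotomy on intersection dimensions that $S$ must split as an $\mathbb{F}_{q^2}$-polynomial module plus one extra $\mathbb{F}_q$-line, producing case (2); the necessity of $n$ being even is then forced by the presence of an $\mathbb{F}_{q^2}$-structure.

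The main obstacle will be the necessity direction: ruling out every intermediate structure between a pure polynomial basis over $\mathbb{F}_q$ and the $\mathbb{F}_{q^2}$-hybrid, and establishing the exact inequalities on $k$ rather than weaker ones. I expect the subspace polynomial framework of Etzion et al.~\cite{subpoly}, which translates $\dim_{\mathbb{F}_q}(S \cap \alpha S)$ into the degree of a greatest common $q$-divisor of two linearized polynomials, to be the decisive algebraic tool. The numerical thresholds in the theorem should emerge as precisely the points at which the $\mathbb{F}_q$-linear independence of consecutive powers of $\lambda$ first fails, beyond which the chain of distinct intersection dimensions can no longer be sustained.
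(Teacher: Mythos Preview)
This theorem is not proved in the present paper: it is quoted verbatim as \cite[Theorem 1.2]{jcta} and serves only as background for the paper's own contributions (the weight distribution in Theorem~\ref{thdistribution}, the equivalence classes in Theorem~\ref{conclusion}, and the $r$-FWS classification in Theorem~\ref{main result3}). There is therefore no proof in this paper to compare your proposal against.

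That said, the paper does reproduce from \cite{jcta} several of the ingredients the original proof uses, and these let one see how your sketch lines up with the actual argument. Your sufficiency proof for case~(1) is essentially correct and matches the content of Lemma~\ref{thjcta} (the intersection $S\cap\lambda^j S$ really is the span of $\lambda^j,\dots,\lambda^{k-1}$ under the stated bound on $k$). For necessity, however, the route taken in \cite{jcta} is not the descending-chain-of-powers argument you propose; rather, it passes through the structural dichotomy recorded here as Lemma~\ref{classification}: any one-orbit code with $\omega_2(\mathcal{C})>0$ has a representative of the form $b\langle 1,\lambda,\dots,\lambda^{k-1}\rangle_{\mathbb{F}_q}$ when $k<t$, or $\overline{S}\oplus b\langle 1,\lambda,\dots,\lambda^{m-1}\rangle_{\mathbb{F}_q}$ with $\overline{S}$ an $\mathbb{F}_{q^t}$-subspace when $k\ge t+1$. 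The FWS analysis is then carried out within each branch, and the second family in Theorem~\ref{Th1.2} arises specifically when $t=2$. Your ``best friend'' invariant is the stabilizer $H(S)$ of the paper, and your parity dichotomy is morally Lemma~\ref{congruennce}, but the subspace-polynomial machinery you anticipate needing is not what drives the classification in \cite{jcta}; the decomposition of Lemma~\ref{classification} does the heavy lifting instead.
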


Their results highlighted a promising connection between cyclic subspace codes and classical FWS codes. In particular, they determined the weight distribution of the first family of FWS codes in Theorem \ref{Th1.2} and left the weight distribution of the second family of codes as an open question. 

\begin{problem}\label{problem1}
It would be nice to determine the weight distribution of the codes in (2) of Theorem \ref{Th1.2}. Note that the weight distributions of the codes in (1) and those in (2) of Theorem \ref{Th1.2} are different.
\end{problem}

Furthermore, they also proposed two related problems:

\begin{problem}\label{problem2}
Is it possible to determine the equivalence classes of the codes in Theorem \ref{Th1.2} under the action of linear isometries?
\end{problem}

\begin{problem}\label{problem3}
Is it possible to characterize $r$-FWS codes similarly to what has been done for FWS codes?
\end{problem}

This paper aims to fill these gaps. We determine the weight distribution of the codes in (2) of Theorem \ref{Th1.2}. We also investigate code equivalence in the context of the normalizer \(N_{GL_n(q)}(\mathbb{F}_{q^n}^*)\), offering partial progress on the classification. Finally, we study the existence of \(r\)-FWS codes and provide a structural characterization of when such codes can exist.

The paper is organized as follows. In Section \ref{sec3}, we focus on Problem \ref{problem1}. In Section \ref{sec4}, we exhibit some equivalence classes by studying the case when the linear map belongs to \(N_{GL_n(q)}(\mathbb{F}_{q^n}^*)\). In Section \ref{sec5}, we  characterize the \(r\)-FWS codes. As a consequence, only in some special cases do we find that \(r\)-FWS codes exist. In Section \ref{sec7}, we summarize and conclude our paper.

\section{The weight distribution of the second family of codes in Theorem \ref{Th1.2}} \label{sec3}
\noindent In \cite[Proposition 4.10]{jcta}, Castello \textit{et al}. analyzed the classification of elements $\mu\in \mathbb{F}_{q^n}$ to derive the dimension of $S\cap \mu S$ for the second class of codes in Theorem \ref{Th1.2}, yet they did not specify the explicit weight distribution for this class codes. In this section, we rigorously determine the weight distribution of the second class of codes in Theorem \ref{Th1.2}, thereby addressing Problem \ref{problem1}.

Given an $\mathbb{F}_q$-subspace $S$ of $\mathbb{F}_{q^n}$, the \textbf{stabilizer} of $S$ is
\[
H(S) = \{x \in \mathbb{F}_{q^n}^* : xS = S\} \cup \{0\}.
\]
Note that $H(S)$ is a subfield of $\mathbb{F}_{q^n}$ and $S$ is linear over $H(S)$. This implies that also
$S \cap \alpha S$ is linear over $H(S)$ for any $\alpha \in \mathbb{F}_{q^n}$. The following lemma gives us information on the weight distribution of a one-orbit cyclic subspace code in relation to the stabilizer
of one of its representatives.
\begin{lemma}\cite[Lemma 3.1]{jcta}\label{congruennce}
	Let $\mathcal{C} = \text{Orb}(S) \subseteq \mathcal{G}_q(n, k)$ and let $(\omega_2(\mathcal{C}), \ldots, \omega_{2k}(C))$ be its weight distribution. If $\omega_{2i}(\mathcal{C}) > 0$ for some $i \in \{1, \ldots, k\}$, then $k \equiv i \pmod{[{H}(S) : \mathbb{F}_q]}$.
\end{lemma}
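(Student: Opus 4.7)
The plan is to exploit the fact that $H(S)$ is a subfield and that $S$ is naturally an $H(S)$-vector space, then show the same holds for $\alpha S$ and consequently for $S \cap \alpha S$. Once every relevant subspace is $H(S)$-linear, its $\mathbb{F}_q$-dimension must be a multiple of $[H(S):\mathbb{F}_q]$, and the required congruence falls out immediately.

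First I would verify that $\alpha S$ is an $H(S)$-subspace of $\mathbb{F}_{q^n}$ for every $\alpha \in \mathbb{F}_{q^n}^*$. For any $h \in H(S)$ and $\alpha s \in \alpha S$ we have $h(\alpha s) = \alpha(hs) \in \alpha S$ since $hS = S$; combined with the $\mathbb{F}_q$-linear (hence additive) structure, this shows $\alpha S$ is closed under scalar multiplication from $H(S)$. Therefore $S \cap \alpha S$, being the intersection of two $H(S)$-subspaces, is itself an $H(S)$-subspace of $\mathbb{F}_{q^n}$.

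Next I would translate the hypothesis $\omega_{2i}(\mathcal{C}) > 0$: there exists $\alpha \in \mathbb{F}_{q^n}^*$ with $\alpha S \neq S$ and $d(S,\alpha S) = 2i$, equivalently $\dim_{\mathbb{F}_q}(S \cap \alpha S) = k - i$. By the preceding paragraph, both $S$ and $S \cap \alpha S$ are $H(S)$-vector spaces, so their $\mathbb{F}_q$-dimensions $k$ and $k-i$ are each divisible by $[H(S):\mathbb{F}_q]$. Subtracting gives $i \equiv 0 \pmod{[H(S):\mathbb{F}_q]}$, hence $k \equiv i \pmod{[H(S):\mathbb{F}_q]}$ (both vanishing mod that divisor), which is the claim.

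There is really no substantive obstacle here; the argument is almost entirely bookkeeping once one notices the $H(S)$-module structure on $\alpha S$. The only mild subtlety is making sure $H(S)$ is genuinely a subfield (so that ``$H(S)$-vector space'' is meaningful and $[H(S):\mathbb{F}_q]$ is well-defined), but this is standard and already stated in the paragraph preceding the lemma.
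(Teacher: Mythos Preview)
Your argument is correct and matches exactly the reasoning the paper sketches in the paragraph immediately preceding the lemma: $H(S)$ is a subfield, $S$ and hence each $\alpha S$ and $S\cap\alpha S$ are $H(S)$-linear, so both $k$ and $k-i$ are multiples of $[H(S):\mathbb{F}_q]$, giving the congruence. The paper itself does not supply a separate proof (the lemma is quoted from \cite{jcta}), so there is nothing further to compare.
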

\begin{lemma}\cite[Proposition 4.10]{jcta} \label{prop4.10}
	Let \( n \) be a positive even integer, let \( \lambda \in \mathbb{F}_{q^n} \setminus \mathbb{F}_{q^2} \), let \( Y = \langle 1, \lambda, \ldots, \lambda^l \rangle_{\mathbb{F}_{q^2}} \) and \( \overline{S} = \langle 1, \lambda, \ldots, \lambda^{l-1} \rangle_{\mathbb{F}_{q^2}} \) with \( 1 \leq l < \frac{t}{2} \) where \( t = [\mathbb{F}_{q^2}(\lambda) : \mathbb{F}_{q^2}] \).
	If \( S \) is the \( \mathbb{F}_q \)-subspace of \( Y \) given by
	\[ S = \langle 1, \lambda, \ldots, \lambda^{l-1} \rangle_{\mathbb{F}_{q^2}} \oplus \lambda^l \mathbb{F}_q \in \mathcal{G}_q(n, 2l + 1), \]
	then $\dim_{\mathbb{F}_q}(S \cap \mu S) =$
	\[
	\begin{cases}
		2l, & \text{if and only if } \mu \in \mathbb{F}_{q^2} \setminus \mathbb{F}_q, \\
		2(l - r), & \text{if and only if } \mu \text{ (or } \mu^{-1}\text{) is of the form } \frac{p_1(\lambda)}{p_2(\lambda)} \text{ for some } p_1(x), p_2(x) \in \mathbb{F}_{q^2}[x] \\
		& \text{such that } \gcd(p_1(x), p_2(x)) = 1, p_1(x) \text{ monic}, \deg p_1(x) = \deg p_2(x) = r, \\
		&\text{and } c(p_2(x), x^r) \in \mathbb{F}_{q^2} \setminus \mathbb{F}_q. \\
		& \text{In this case } S \cap \mu S = \overline{S} \cap \mu \overline{S}. \\
		2(l - r) + 1, & \text{if and only if } \mu \text{ (or } \mu^{-1}\text{) is of the form } \frac{p_1(\lambda)}{p_2(\lambda)} \text{ for some } p_1(x), p_2(x) \in \mathbb{F}_{q^2}[x] \\
		& \text{such that } \gcd(p_1(x), p_2(x)) = 1, p_1(x) \text{ monic}, \deg p_1(x) = r \geq \deg p_2(x), \\
		&\text{ and } c(p_2(x), x^r) \in \mathbb{F}_q. \\
		& \text{In this case } \dim_{\mathbb{F}_q}(S \cap \mu S) = \dim_{\mathbb{F}_q}(\overline{S} \cap \mu \overline{S}) + 1.
	\end{cases}
	\]
	Finally, for any \( r \in \{1, \ldots, l - 1\} \) (resp. \( r \in \{1, \ldots, l\} \)) there exist elements \( \mu \in \mathbb{F}_{q^n} \) for which \( \dim_{\mathbb{F}_q}(S \cap \mu S) = 2(l - r) \) (resp. \( \dim_{\mathbb{F}_q}(S \cap \mu S) = 2(l - r) + 1 \)).
	
\end{lemma}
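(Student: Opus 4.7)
The plan is to analyse $S \cap \mu S$ by exploiting the decomposition $S = \overline{S} \oplus \lambda^l \mathbb{F}_q$. Since $\overline{S}$ is $\mathbb{F}_{q^2}$-linear, $\overline{S} \cap \mu \overline{S}$ is automatically an $\mathbb{F}_{q^2}$-subspace and hence of even $\mathbb{F}_q$-dimension, so the parity of $\dim_{\mathbb{F}_q}(S \cap \mu S)$ is governed entirely by whether the extra line $\lambda^l \mathbb{F}_q$ contributes.

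The first step is to dispose of $\mu \in \mathbb{F}_{q^2} \setminus \mathbb{F}_q$ directly: here $\mu \overline{S} = \overline{S}$, while $\mu \lambda^l \mathbb{F}_q$ is a different $\mathbb{F}_q$-line inside $\lambda^l \mathbb{F}_{q^2}$, and the $\mathbb{F}_{q^2}$-linear independence of $1, \lambda, \ldots, \lambda^l$ (guaranteed by $l < t/2$) forces $S \cap \mu S = \overline{S}$, of dimension $2l$. For $\mu \notin \mathbb{F}_{q^2}$ I would write $\mu = p_1(\lambda)/p_2(\lambda)$ in lowest terms over $\mathbb{F}_{q^2}[x]$ with $p_1$ monic, replacing $\mu$ by $\mu^{-1}$ if necessary so that $r := \deg p_1 \geq \deg p_2$. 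Writing a candidate element of $S \cap \mu S$ as $V(\lambda) + c\lambda^l$ with $V \in \mathbb{F}_{q^2}[x]_{<l}$ and $c \in \mathbb{F}_q$, and equating with $\mu(W(\lambda) + c'\lambda^l)$ of the same form, clearing the denominator $p_2(\lambda)$ and using $2l < t$ to promote the identity to $\mathbb{F}_{q^2}[x]$ yields
\[
p_2(x)V(x) - p_1(x)W(x) = \bigl(c' p_1(x) - c p_2(x)\bigr) x^l.
\]

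The ``bulk'' case $c = c' = 0$ gives $p_2 V = p_1 W$, so $p_1 \mid V$ by coprimality, and hence $\overline{S} \cap \mu \overline{S} = p_1(\lambda) \cdot \langle 1, \lambda, \ldots, \lambda^{l-r-1}\rangle_{\mathbb{F}_{q^2}}$, of $\mathbb{F}_q$-dimension $2(l-r)$. The split between cases (2) and (3) then emerges from the coefficient of $x^{l+r}$ on the right-hand side: it equals $c' - c \cdot c(p_2(x), x^r)$ and must vanish because the left-hand side has degree strictly less than $l+r$. This linear constraint on $(c, c') \in \mathbb{F}_q^2$ admits a nontrivial solution precisely when $c(p_2(x), x^r) \in \mathbb{F}_q$, in which case a routine B\'ezout-style dimension count produces compatible $V, W$ and the intersection gains exactly one extra $\mathbb{F}_q$-dimension (case 3); otherwise $c = c' = 0$ is forced and $S \cap \mu S = \overline{S} \cap \mu \overline{S}$ (case 2). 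For the final existence clause I would exhibit explicit rational functions realising each admissible $r$, choosing the coefficient of $x^r$ in the denominator to lie in $\mathbb{F}_{q^2} \setminus \mathbb{F}_q$ or in $\mathbb{F}_q$ according to the parity desired. I expect the main obstacle to be the coefficient-of-$\lambda^l$ bookkeeping above — in particular ensuring that the $\mathbb{F}_q$-valuedness of $c, c'$ interacts correctly with the $\mathbb{F}_{q^2}$-structure of the other terms, and that the bound $l < t/2$ is sufficient to promote field identities in $\mathbb{F}_{q^n}$ to genuine polynomial identities in $\mathbb{F}_{q^2}[x]$ throughout the argument.
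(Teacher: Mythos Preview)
This lemma is not proved in the present paper: it is quoted verbatim from Castello--Polverino--Zullo (reference \cite{jcta}, Proposition~4.10) and used as a black box, so there is no in-paper proof to compare against. That said, your outline is essentially the natural argument and almost certainly coincides with the original one in \cite{jcta}: lift the relation $p_2(\lambda)(V(\lambda)+c\lambda^l)=p_1(\lambda)(W(\lambda)+c'\lambda^l)$ to $\mathbb{F}_{q^2}[x]$ using $l+r\le 2l<t$, recover $\overline{S}\cap\mu\overline{S}$ from the homogeneous case $c=c'=0$, and read off the parity from the coefficient of $x^{l+r}$.

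Two small points to make explicit. First, you tacitly assume $\mu\in\mathbb{F}_{q^2}(\lambda)$; for $\mu\notin\mathbb{F}_{q^2}(\lambda)$ one has $S\cap\mu S=\{0\}$ automatically (a nonzero element of the intersection would write $\mu$ as a ratio of elements of $\mathbb{F}_{q^2}(\lambda)$), and this disposes of that case. Second, the ``B\'ezout-style dimension count'' you allude to is precisely the statement that the $\mathbb{F}_{q^2}$-linear map $(V,W)\mapsto p_2V-p_1W$ from $\mathbb{F}_{q^2}[x]_{<l}\times\mathbb{F}_{q^2}[x]_{<l}$ to $\mathbb{F}_{q^2}[x]_{<l+r}$ has kernel of dimension $l-r$ (coprimality forces $V=p_1U$, $W=p_2U$) and is therefore surjective; this is what guarantees that once the constraint $c'=c\cdot c(p_2,x^r)$ drops the right-hand side to degree $<l+r$, compatible $V,W$ actually exist, adding exactly one $\mathbb{F}_q$-dimension. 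With these two remarks spelled out your sketch is complete.
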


As observed in Lemma $\ref{prop4.10}$, to determine the weight distribution, we need to count the pair $(p_1(\lambda),p_2(\lambda))$ with the form $\cfrac{p_1(\lambda)}{p_2(\lambda)}$, where $p_1(\lambda),p_2(\lambda)\in \mathbb{F}_{q^2}[x]$ also satisfies some properties, which is closely related to the enumeration of coprime polynomials, hence we need the following result.
\begin{lemma}\cite[Theorem 3]{coprimepolynomials}\label{probility}
	Let $\mathbb{F}$ be a finite field of $q$ elements, and let $a(x)$ and $b(x)$ be randomly chosen from the set of polynomials in $\mathbb{F}[x]$ of degree $m$ and $n$, respectively, where $m$ and $n$ are not both zero. Then the probability that $a(x)$ and $b(x)$ are relatively prime is $1-\cfrac{1}{q}$.
\end{lemma}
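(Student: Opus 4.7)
The approach is to first reduce to monic polynomials and then exploit a clean classification-by-gcd identity that telescopes into the answer. Factoring out leading coefficients, write $a = \alpha\,\tilde a$ and $b = \beta\,\tilde b$ with $\alpha,\beta \in \mathbb{F}^*$ and $\tilde a, \tilde b$ monic of degrees $m, n$. Since $\gcd(a,b)$ is a unit if and only if $\gcd(\tilde a, \tilde b) = 1$, the factor $(q-1)^2$ coming from the leading coefficients cancels in the probability. Thus it suffices to show that the number $g(m,n)$ of coprime pairs of monic polynomials of degrees $m, n \ge 1$ equals $(q-1)q^{m+n-1}$; the requested probability is then $g(m,n)/q^{m+n} = (q-1)/q = 1-1/q$.

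To compute $g(m,n)$, I would classify every pair $(\tilde a, \tilde b)$ of monic polynomials of degrees $m, n$ by the degree $k$ of its monic gcd $d$: writing $\tilde a = d\,a'$ and $\tilde b = d\,b'$, the quotient pair $(a', b')$ is a coprime monic pair of degrees $(m-k, n-k)$, and there are exactly $q^k$ monic polynomials of degree $k$. This gives the identity
$$q^{m+n} \;=\; \sum_{k=0}^{\min(m,n)} q^k\, g(m-k,\, n-k).$$
Assuming without loss of generality $1 \le m \le n$, isolating the $k=0$ term and reindexing the rest by $j = k-1$ turns the tail into $q\sum_{j=0}^{m-1} q^j\, g(m-1-j, n-1-j)$, which by the same identity applied in degrees $(m-1, n-1)$ evaluates to $q\cdot q^{m+n-2} = q^{m+n-1}$. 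Hence $q^{m+n} = g(m,n) + q^{m+n-1}$, giving $g(m,n) = (q-1)q^{m+n-1}$ as required.

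There is no substantive obstacle here: this is a short and classical counting fact, and the only real insight is that partitioning pairs by their gcd induces a recursion that telescopes immediately to the clean value $(q-1)q^{m+n-1}$. The only subtlety is the boundary case $\min(m,n) = 0$, for which the formula degenerates (the probability is actually $1$, not $1-1/q$); however, the application of the lemma within Lemma~\ref{prop4.10} always invokes it with $\deg p_1 = \deg p_2 = r \ge 1$, so this degenerate case does not arise in the sequel.
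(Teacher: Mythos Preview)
The paper does not supply its own proof of this lemma; it simply imports the statement from \cite{coprimepolynomials} and invokes it as a counting tool inside the proof of Theorem~\ref{thdistribution}. There is therefore nothing in the present manuscript to compare your argument against.

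That said, your proof is correct. The reduction to monic pairs is valid, the decomposition $q^{m+n}=\sum_{k=0}^{\min(m,n)} q^{k}\, g(m-k,n-k)$ is exactly the gcd-stratification bijection, and your one-step telescoping (peeling off $k=0$ and recognising the tail as $q$ times the same identity in degrees $(m-1,n-1)$) is clean and rigorous for $m,n\ge 1$. You are also right that the stated probability $1-1/q$ is false when $\min(m,n)=0$ (the probability is then $1$); this is a wrinkle in the quoted statement rather than in your argument, and as you observe the paper only ever applies the lemma with both degrees at least $1$, so nothing downstream is affected. One minor correction: the lemma is actually invoked in the proof of Theorem~\ref{thdistribution}, not in Lemma~\ref{prop4.10}.
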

We are now in a position to state and prove the main result of this section.
\begin{theorem}\label{thdistribution}
	Let \( n \) be a positive even integer, let \( \lambda \in \mathbb{F}_{q^n} \setminus \mathbb{F}_{q^2} \), let \( Y = \langle 1, \lambda, \ldots, \lambda^l \rangle_{\mathbb{F}_{q^2}} \) and \( 1 \leq l < \frac{t}{2}\) where \( t = [\mathbb{F}_{q^2}(\lambda) : \mathbb{F}_{q^2}] \). If \( S \) is the \( \mathbb{F}_q \)-subspace of \( Y \) given by
	\[
	S = \langle 1, \lambda, \ldots, \lambda^{l-1} \rangle_{\mathbb{F}_{q^2}} \oplus \lambda^l \mathbb{F}_q \in \mathcal{G}_q(n, 2l + 1),
	\]
	then 
	\[
	\omega_{2i} = 
	\begin{cases} 
		q, & \text{if } i = 1, \\
		&\\
		q^{4r-1}(q^2 - 1), & \text{if } i = 2r + 1,\  r \in \{1, \ldots, l - 1\}, \\
		&\\
		q^{4r-2}(q+1)^2, & \text{if } i = 2r,\  r \in \{1, \ldots, l\}, \\
		&\\
		\frac{q^n-1}{q-1}-(q+1)-(q+1)q^2\frac{(q^{4l-3}-q)(q-1)+(q+1)(q^{4l}-1)}{q^4-1} & \text{if}\  i=2l+1. \\
	\end{cases}
	\]
\end{theorem}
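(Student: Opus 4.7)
The plan is to apply Lemma \ref{prop4.10} directly: classify each $\mu\in\mathbb{F}_{q^n}^*$ by the value of $\dim_{\mathbb{F}_q}(S\cap\mu S)$, count how many $\mu$ fall in each class, and pass to distinct orbit elements. Since $k=2l+1$, the identity $d(S,\mu S)=2k-2\dim_{\mathbb{F}_q}(S\cap\mu S)$ pairs $i=1$ with $\dim=2l$; $i=2r+1$, $r\in\{1,\dots,l-1\}$, with $\dim=2(l-r)$; $i=2r$, $r\in\{1,\dots,l\}$, with $\dim=2(l-r)+1$; and $i=2l+1$ with $\dim=0$. I first observe that $H(S)=\mathbb{F}_q$: Lemma \ref{prop4.10} accounts for every $\mu$ yielding a proper intersection, so $\mu S=S$ forces $\mu\in\mathbb{F}_q^*$. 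Hence $|\mathrm{Orb}(S)|=(q^n-1)/(q-1)$, and any $\mathbb{F}_q^*$-stable set $X$ of scalars contributes $|X|/(q-1)$ distinct subspaces $\mu S$. The case $i=1$ is then immediate: $|\mathbb{F}_{q^2}\setminus\mathbb{F}_q|=q^2-q$ gives $\omega_2=q$.

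For $i=2r+1$ I would count pairs $(p_1,p_2)\in\mathbb{F}_{q^2}[x]^2$ with $p_1$ monic, $\deg p_1=\deg p_2=r$, leading coefficient of $p_2$ in $\mathbb{F}_{q^2}\setminus\mathbb{F}_q$, and $\gcd(p_1,p_2)=1$. The map $(p_1,p_2)\mapsto p_1(\lambda)/p_2(\lambda)$ is injective: if two such pairs yield the same quotient then $p_1p_2'=p_1'p_2$ holds as a polynomial identity (both sides of degree $\le 2r<t$, since $r\le l-1$ and $2l<t$), and coprimeness plus monicity force equality. A short verification shows the ``or $\mu^{-1}$'' clause in Lemma \ref{prop4.10} is redundant here: $\mu$ and $\mu^{-1}$ lie in the form simultaneously, as $1/c\in\mathbb{F}_{q^2}\setminus\mathbb{F}_q$ whenever $c$ is. Applying Lemma \ref{probility} over $\mathbb{F}_{q^2}$ (coprime fraction $1-1/q^2$) to the total pair count $q^{2r}(q^2-q)q^{2r}$ gives $(q-1)^2(q+1)q^{4r-1}$ admissible $\mu$'s, hence $\omega_{4r+2}=(q^2-1)q^{4r-1}$ after dividing by $q-1$.

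The case $i=2r$ is the most delicate. Let $A_r$ be the set of $\mu$ that lie directly in the form described in Lemma \ref{prop4.10} for $\dim=2(l-r)+1$, and split $A_r=A_r^{=}\sqcup A_r^{>}$ according to whether $\deg p_2=r$ (with leading coefficient of $p_2$ in $\mathbb{F}_q^*$) or $\deg p_2<r$. The key asymmetry is: if $\mu\in A_r^{=}$ then $\mu^{-1}\in A_r^{=}$ (rescale $p_2$ by $1/c\in\mathbb{F}_q^*$), whereas if $\mu\in A_r^{>}$ then $\mu^{-1}\notin A_r$ (the canonical numerator of $\mu^{-1}$ would have degree strictly less than $r$, violating $\deg p_1'\ge\deg p_2'$). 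Therefore $B_r:=\{\mu:\dim(S\cap\mu S)=2(l-r)+1\}=A_r\sqcup(A_r^{>})^{-1}$, so $|B_r|=|A_r^{=}|+2|A_r^{>}|$. A direct coprime-pair count yields $|A_r^{=}|=(q-1)(q^2-1)q^{4r-2}$; for $|A_r^{>}|$ I sum over $\deg p_2=s\in\{0,\dots,r-1\}$, treating $s=0$ separately (all pairs coprime, contribution $(q^2-1)q^{2r}$) and applying Lemma \ref{probility} for $s\ge 1$. The resulting geometric sum telescopes to $|A_r^{>}|=(q^2-1)q^{4r-2}$, hence $|B_r|=(q+1)(q^2-1)q^{4r-2}$ and $\omega_{4r}=(q+1)^2q^{4r-2}$.

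Finally, $\omega_{4l+2}$ is obtained by subtraction: from $|\mathrm{Orb}(S)|=(q^n-1)/(q-1)$ one removes $1$ (for $S$ itself), $q=\omega_2$, and the two geometric sums $\sum_{r=1}^{l-1}(q^2-1)q^{4r-1}$ and $\sum_{r=1}^{l}(q+1)^2q^{4r-2}$. Combining these over the common denominator $q^4-1$ and using $q(q^{4l-4}-1)=q^{4l-3}-q$ produces the stated closed form. I expect the main obstacle to be the inclusion-exclusion step for $\omega_{4r}$: the asymmetric behavior of $A_r^{=}$ and $A_r^{>}$ under inversion must be tracked precisely, since a careless handling of the ``$\mu$ or $\mu^{-1}$'' dichotomy in Lemma \ref{prop4.10} would halve or double the subcase-3a contribution and so corrupt the entire distribution.
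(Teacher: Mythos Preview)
Your proposal is correct and follows essentially the same approach as the paper: both arguments invoke Lemma~\ref{prop4.10}, count coprime polynomial pairs via Lemma~\ref{probility}, split the odd-dimensional case according to whether $\deg p_2=r$ or $\deg p_2<r$, and handle the ``$\mu$ or $\mu^{-1}$'' clause by doubling only the strict-degree subcase before dividing by $q-1$. The only cosmetic differences are that the paper deduces $H(S)=\mathbb{F}_q$ from Lemma~\ref{congruennce} rather than directly from Lemma~\ref{prop4.10}, and that you make the injectivity of $(p_1,p_2)\mapsto p_1(\lambda)/p_2(\lambda)$ and the inversion symmetry of $A_r^{=}$ explicit, whereas the paper leaves these implicit.
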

\begin{proof}We start by counting the number of $\mu \in \mathbb{F}_{q^n}$ such that $\text{dim}(S\cap \mu S)=2l+1$. 
	In this case, $\mu \in H(S)$. Since $\mathcal{C}=\text{Orb}(S)$ is a FWS code and $\omega_{4l}>0$, we obtain that $H(S)=\mathbb{F}_q$ by Lemma $\ref{congruennce}$. Hence, $\omega_{0}=1$. Moreover, $\mu S=S$ if and only if $\mu \in \mathbb{F}_q^*$. By Lemma $\ref{prop4.10}$, $\text{dim}(S\cap \mu S)=2l$ if and only if $\mu \in \mathbb{F}_{q^2}\setminus \mathbb{F}_q$. Therefore, $\omega_{2}=q$. Let $c = c(p_2(x), x^r)$.
	
	If $\text{dim}_{\mathbb{F}_q}(S \cap \mu S)=2(l-r)$, then
	\[
	\mu =\frac{p_1(\lambda)}{p_2(\lambda)} \iff \mu = c ^{-1}\cdot\frac{p_1(\lambda)}{\widetilde{p_{2}}(\lambda)}
	,\]
	where $c\in \mathbb{F}_{q^2}\setminus \mathbb{F}_q$, $p_1(x), p_2(x) \in \mathbb{F}_{q^2}[x]$, $\gcd(p_1(x), p_2(x)) = 1, p_1(x) \text{ monic}, \deg p_1(x) = \deg p_2(x) = r$, and \( p_2(\lambda) = c \widetilde{p_{2}}(\lambda) \), thus \( \widetilde{p_{2}}(x) \) is monic.
	According to Lemma $\ref{probility}$, the corresponding number of such pairs is 
	\[ q^{4r-2}(q^2 - 1).
	\]
	Moreover, the ordered triples \( (c, p_1(x), p_2(x)) \) correspond bijectively to \( \mu \). Note that the number of the values of $c$ is $q^2-q$. 
	Therefore,
	\begin{equation}\label{2r}
		\begin{aligned}
			& \left| \left\{ \mu S : \mu \in \mathbb{F}_{q^2}(\lambda) \setminus \mathbb{F}_{q^2}, \dim_{\mathbb{F}_q}(S \cap \mu S) = 2(l - r) \right\} \right| \\
			&= (q^2 - q) \cdot q^{4r-2}(q^2 - 1) \cdot \frac{1}{q - 1} \\
			&= q^{4r-1}(q^2 - 1).
		\end{aligned}
	\end{equation}
	Thus, $\omega_{4r+2}=q^{4r-1}(q^2 - 1)$ for \( r \in \{1, \ldots, l - 1\} \).
	
	If $\text{dim}_{\mathbb{F}_q}(S\cap\mu S)=2(l-r)+1$. We assume \( p_2(x) \) is of degree \( s \), then 
	\[
	\mu = \frac{p_1(\lambda)}{p_2(\lambda)} \iff \mu = d^{-1} \cdot \frac{p_1(\lambda)}{\widetilde{p_2}(\lambda)}
	,\]
	where \( c\in \mathbb{F}_q,\ d = c(p_2(x), x^s) \in \mathbb{F}_{q^2} \), $p_1(x), p_2(x) \in \mathbb{F}_{q^2}[x], \gcd(p_1(x), p_2(x)) = 1$, $p_1(x)$ monic, $\deg p_1(x) = r \geq \deg p_2(x)=s$, and \( p_2(\lambda) = d \widetilde{p_{2}}(\lambda) \), thus \( \widetilde{p_{2}}(x) \) is monic. It divides into two cases:\begin{itemize}
		\item [$(1)$] If $c = 0$, then \( s<r \). And we assume that $\mu$ is the form of $\frac{p_1(\lambda)}{p_2(\lambda)}$. Similarly, we need to count all ordered pairs of coprime polynomials \( (p_1(x), p_2(x)) \) satisfying the conditions in this case. By Lemma $\ref{probility}$, if \( \deg p_2(x) = s \geq 1 \), there are
		\[
		(q^2)^r \cdot (q^2)^s \cdot \left(1 - \frac{1}{q^2}\right) = q^{2r + 2s - 2} (q^2 - 1)
		\]
		ordered pairs of coprime monic polynomials over $\mathbb{F}_{q^2}[x]$. Since there are $q^2-1$ choices of $d\in\mathbb{F}_{q^2}^*$, there are $ q^{2r + 2s - 2} (q^2 - 1)^2$ different $\mu$'s in total. If \( \deg p_2(x) = 0 \), there are \( q^{2r} (q^2 - 1) \) ordered pairs in total (since every non-zero constant is coprime with the polynomial). Consequently, the number of the ordered pairs \( (p_1(x), p_2(x)) \) for $c=0$ is
		\begin{align*}
			\sum_{s=1}^{r-1} q^{2r + 2s - 2} (q^2 - 1)^2 + q^{2r} (q^2 - 1) &= q^{2r - 2} (q^2 - 1) \sum_{s=1}^{r-1} q^{2s} + q^{2r} (q^2 - 1) \\
			&= q^{2r - 2} (q^2 - 1)^2 \cdot \frac{q^{2r} - q^2}{q^2 - 1} + q^{2r} (q^2 - 1) \\
			&= q^{4r-2}(q^2-1).
		\end{align*}
		Notice that either $\mu$ or $\mu^{-1}$ can be the form of $\frac{p_1(\lambda)}{p_2(\lambda)}$, so we have $2q^{4r-2}(q^2-1)$ different $\mu$'s in total. 
		\item [$(2)$] If $c\ne 0$, then \( p_2(x) \) is of degree \( r \) and $d=c \in \mathbb{F}_q$. By Eq. $(\ref{2r})$, there are \( q^{4r - 2} (q^2 - 1) \) choices for \( \cfrac{p_1(\lambda)}{\widetilde{p_2}(\lambda)} \) and $q-1$ choices for $d$.
		\end{itemize}

	Similarly, \( \mu S=S \) if and only if $\mu \in \mathbb{F}_q^*$, hence we obtain that
	\begin{equation}\label{2l+1}
		\begin{aligned}
			w_{4r} &= |\{ \mu S \in \text{Orb}(S) : \mu \in \mathbb{F}_q^{*n}, d(S, \mu S) = 4r \}| \\
			&= \frac{1}{q - 1}\left((q-1)q^{4r-2}(q^2-1) + 2q^{4r-2}(q^2-1)\right)\\
			&=q^{4r-2}(q+1)^2
		\end{aligned}
	\end{equation}
	for \( r \in \{1, \ldots, l\} \).
	
	If $\text{dim}(S\cap \mu S)=0$, then \begin{align*}
		\omega_{4l+2}&=\frac{{{q^n} - 1}}{{q - 1}} - q - 1 - \sum_{i = 1}^{l-1}\left(q^{4i-1}(q^2-1)\right)-\sum_{i=1}^{l}\left(q^{4i-2}(q+1)^2\right)\\
		&=\frac{q^n-1}{q-1}-(q+1)-\frac{q^2-1}{q}\sum_{i=1}^{l-1}q^{4i}-\frac{(q+1)^2}{q^2}\sum_{i=1}^{l}q^{4i}\\
		&=\frac{q^n-1}{q-1}-(q+1)-(q+1)q^2\frac{(q^{4l-3}-q)(q-1)+(q+1)(q^{4l}-1)}{q^4-1}.
	\end{align*}
	This completes the proof.\end{proof}

\begin{example}
Here we give some examples of weight distributions of the second families of FWS codes in Theorem \ref{Th1.2}. Let $n=10,\ l=2$, and $t=5$:
\begin{itemize}
			\item[$(1)$] If $q=2$, then $S=\left<1,\lambda \right>_{\mathbb{F}_{2^2}}\oplus \lambda^2 \mathbb{F}_2$ is a dimension $5$ vector space over $\mathbb{F}_2$, where $\mathbb{F}_2(\lambda)=\mathbb{F}_{2^{10}}$, and the weight distribution of $\mathcal{C}=\text{Orb}(S)$ is:
$$\omega_0=1 ,~\omega_2=2, ~ \omega_4=36, ~ \omega_6=24 ,~ \omega_8 = 576 ,~\omega_{10}=384.$$

			\item[$(2)$] If $q=3$, then $S=\left<1,\lambda \right>_{\mathbb{F}_{3^2}}\oplus \lambda^2 \mathbb{F}_3$ is a dimension $5$ vector space over $\mathbb{F}_3$, where $\mathbb{F}_3(\lambda)=\mathbb{F}_{3^{10}}$, and the weight distribution of $\mathcal{C}=\text{Orb}(S)$ is:
$$\omega_0=1 ,~ \omega_2=3, ~ \omega_4=144, ~ \omega_6=216, ~ \omega_8 = 11664 ,~ \omega_{10}=17496.$$

			\item[$(3)$] If $q=5$, then $S=\left<1,\lambda \right>_{\mathbb{F}_{5^2}}\oplus \lambda^2 \mathbb{F}_5$ is a dimension $5$ vector space over $\mathbb{F}_5$, where $\mathbb{F}_5(\lambda)=\mathbb{F}_{5^{10}}$, and the weight distribution of $\mathcal{C}=\text{Orb}(S)$ is:
$$\omega_0=1,~ \omega_2=5 ,~ \omega_4=900 ,~ \omega_6=3000,~ \omega_8 = 562500, ~ \omega_{10}=1875000.$$
	\end{itemize}
\end{example}

\section{Equivalence classes}\label{sec4}
\noindent In \cite{jcta}, the authors introduced the definition of (linear) isometry between one-orbit cyclic subspace codes and presented some necessary conditions for two (linearly) isometric one-orbit cyclic subspace codes. Besides, they raised the second open problem about the equivalence classes of two families of codes mentioned in Theorem \ref{Th1.2}. In this section, we exhibit some equivalence classes of two families of codes in Theorem \ref{Th1.2} under the condition that the map belongs to the normalizer of $\mathbb{F}_{q^n}^*$.

\begin{definition}\cite[Definition 2.1]{jcta}
	Let \( \mathcal{C}_1, \mathcal{C}_2 \subseteq G_q(n, k) \). \( \mathcal{C}_1 \) and \( \mathcal{C}_2 \) are called (linearly) isometric if there exists an isomorphism \( \psi \in \text{GL}_n(q) \) such that \( \psi(\mathcal{C}_1) = \mathcal{C}_2 \), where
	\[
	\psi(\mathcal{C}_1) = \left\{ \psi(V) : V \in \mathcal{C}_1 \right\}.
	\]
	In this case \( \psi \) is called a (linear) isometry between \( \mathcal{C}_1 \) and \( \mathcal{C}_2 \). In the special case, where \( \mathcal{C}_1 = \text{Orb}(S_1) \), \( \mathcal{C}_2 = \text{Orb}(S_2) \) and \( \psi(\mathcal{C}_1) = \mathcal{C}_2 \) for some \( \psi \in \text{N}_{\text{GL}_n(q)}(\mathbb{F}_{q^n}^*) \), we call the cyclic orbit codes \( \text{Orb}(S_1) \) and \( \text{Orb}(S_2) \) Frobenius-isometric and \( \psi \) a Frobenius isometry.
	Also, if \( \mathcal{C} \subseteq G_q(n, k) \), then the automorphism group of \( \mathcal{C} \) is the group of linear isometries that fix \( \mathcal{C} \), that is \( \text{Aut}(\mathcal{C}) := \left\{ \psi \in \text{GL}_n(q) : \psi(\mathcal{C}) = \mathcal{C} \right\} \).
\end{definition}

Trautmann $\textit{et al}$. in \cite{equivalence} studied the cyclic orbit codes and determined the image of cyclic orbit codes under the (linear) isometry maps.
\begin{lemma}\cite[Theorem 10]{equivalence}\label{...}
	Let $G \leq \mathrm{GL}_n(q)$, $\psi \in \mathrm{GL}_n(q)$, and $U \in \mathcal{G}_q(k,n)$.
	\begin{enumerate}
		\item[$(1)$] Set $G' = \psi G \psi^{-1}$ and $U' = \psi(U)$. Then $\psi(\operatorname{Orb}_G(U)) = \operatorname{Orb}_{G'}(U'), $
		i.e., the two orbit codes are linearly isometric.
		
		\item[$(2)$]  Let $\mathcal{C} = \operatorname{Orb}_G(U)$ and $\mathcal{C}' = \psi(\mathcal{C})$. Then $\mathcal{C}' = \operatorname{Orb}_{\psi G \psi^{-1}}(U') \ \text{with } U' = \psi(U).$
		As a consequence, if $\psi \in \mathrm{N}_{\mathrm{GL}_n}(G)$, then $\mathcal{C}$ and $\mathcal{C}'$ are isometric $G$-orbit codes.
	\end{enumerate}
\end{lemma}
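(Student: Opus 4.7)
The plan is to prove both parts by a direct orbit chase, since the statement is essentially a compatibility between conjugation in $\mathrm{GL}_n(q)$ and the evaluation action on subspaces, with no deeper ingredient required.

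For part (1), I would first take an arbitrary element $V \in \psi(\operatorname{Orb}_G(U))$, write it as $V = \psi(gU)$ for some $g \in G$, and use the fact that $\psi$ is an invertible linear map (so it acts well on subspaces) to rewrite
\[
\psi(gU) \;=\; (\psi g \psi^{-1})\bigl(\psi(U)\bigr) \;=\; g' U',
\]
where $g' = \psi g \psi^{-1} \in G'$ and $U' = \psi(U)$. This shows $\psi(\operatorname{Orb}_G(U)) \subseteq \operatorname{Orb}_{G'}(U')$. For the reverse inclusion, I would take $g' U' \in \operatorname{Orb}_{G'}(U')$, write $g' = \psi g \psi^{-1}$ with $g \in G$ by definition of $G'$, and run the same identity backwards to get $g' U' = \psi(gU) \in \psi(\operatorname{Orb}_G(U))$. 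The two inclusions give equality, and since $\psi \in \mathrm{GL}_n(q)$ is a linear isometry of the Grassmannian with the subspace metric, the orbits are linearly isometric.

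Part (2) is then an immediate corollary: the first sentence is just part (1) with $\mathcal{C} = \operatorname{Orb}_G(U)$ and $\mathcal{C}' = \psi(\mathcal{C})$. For the consequence, if $\psi \in \mathrm{N}_{\mathrm{GL}_n}(G)$, then by definition $\psi G \psi^{-1} = G$, so the conjugated group $G'$ appearing in (1) equals $G$ itself, and hence $\mathcal{C}' = \operatorname{Orb}_G(U')$ is again a $G$-orbit code, giving an isometry between two $G$-orbit codes.

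I do not expect any real obstacle here; the only thing to be careful about is bookkeeping the direction of conjugation (writing elements of $G'$ as $\psi g \psi^{-1}$ rather than $\psi^{-1} g \psi$) and checking that the argument uses only that $\psi$ is a bijection on subspaces, so that applying $\psi$ and $\psi^{-1}$ to both sides of $g'U' = \psi(gU)$ is legitimate. After that, the statement follows purely formally from the identity $\psi g = (\psi g \psi^{-1}) \psi$.
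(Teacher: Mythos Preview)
Your proof is correct and entirely standard; the identity $\psi(gU) = (\psi g \psi^{-1})(\psi(U))$ is exactly what drives both inclusions, and your handling of the normalizer case in part~(2) is fine. Note, however, that the paper does not give its own proof of this lemma at all: it is simply quoted as \cite[Theorem~10]{equivalence} and used as a black box, so there is no argument in the paper to compare against.
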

In the following theorem, we determine several classes of equivalence classes of two families of codes in Theorem \ref{Th1.2} under the condition of $\psi \in N_{GL_{n}(\mathbb{F}_{q^n}^*)}$.
\begin{theorem}\label{conclusion}Let $\mathcal{C}$ be a one-orbit cyclic orbit code in $\mathcal{G}_q(k,n)$, and let $\psi\in N_{GL_n(q)(\mathbb{F}_{q^n}^*)}$. Then the followig statements hold:
	\begin{itemize}
		\item[$(1)$]If $\mathcal{C}=\text{Orb}(S)$, then $$\psi(\mathcal{C})=\text{Orb}(S'),$$
		 where $S'=\psi(\mathcal{S})=\alpha^{q^i}\left<1, {\lambda}^{q^i},...,{\lambda^{(q^i)(k-1)}}\right>_{\mathbb{F}_q}$ with $\alpha \in \mathbb{F}_{q^n}^*$ and $i \in \{0,1,..,t-1\}$, where $S$ is of the form in Family $(1)$ of Theorem \ref{Th1.2} and $t=[\mathbb{F}_q(\lambda):\mathbb{F}_q]$. 
		\item[$(2)$]If $\mathcal{C}=\text{Orb}(S)$, then $$\psi(\mathcal{C})=\text{Orb}(S'),$$
		 where $S'=\psi(\mathcal{S})=\alpha^{q^i}\left(\left<1, {\lambda}^{q^i},...,{\lambda^{(q^i)(l-1)}}\right>_{\mathbb{F}_{q^2}}\oplus \lambda^{(q^i)l}{\mathbb{F}_q}\right)$ with $\alpha \in \mathbb{F}_{q^n}^*$ and $i \in \{0,1,..,2t-1\}$, where $S$ is of the form in Family $(2)$ of Theorem \ref{Th1.2} and $t=[\mathbb{F}_{q^2}(\lambda):\mathbb{F}_{q^2}]$.
	\end{itemize}
	\begin{proof}
		\begin{itemize}
			\item[$(1)$]By Lemma $\ref{...}$, $\psi(\mathcal{C})=\text{Orb}_{\psi{\mathbb{F}_{q^n}^*}\psi^{-1}}(\psi(S))=\text{Orb}(\psi(S))$ since $\psi\in N_{GL_{q}}(\mathbb{F}_{q^n}^*)$. Note that the normalizer $N_{GL_{q}}(\mathbb{F}_{q^n}^*)$ of $\mathbb{F}_{q^n}^*$ in $GL_n(q)$ is
			isomorphic to $\text{Gal}(\mathbb{F}_{q^n}|\mathbb{F}_{q})\rtimes \mathbb{F}_{q^n}^* $ by \cite[Theorem 2.4]{Automorphism}. We have $\psi(\mathcal{C})=\alpha^{q^i}\left<1, {\lambda}^{q^i},...,{\lambda^{(q^i)(k-1)}}\right>_{\mathbb{F}_q}$ with $\alpha \in \mathbb{F}_{q^n}^*$ and $i \in \{0,1,..,t-1\}$.  
			\item[$(2)$]Similarly, we have $\psi(\mathcal{C})=\text{Orb}(\psi(S))$. For $\alpha \in \mathbb{F}_{q^n}^*$ and $i \in \{0,1,..,2t-1\}$, assume that $$S''=\alpha^{q^i}\left(\left<1, {\lambda}^{q^i},...,{\lambda^{(q^i)(l-1)}}\right>_{\mathbb{F}_{q^2}}\oplus \lambda^{(q^i)l}{\mathbb{F}_q}\right).$$
			 For any $\beta\in S$, then $\beta=p(\lambda)+a\lambda^l$, where $p(x)\in \mathbb{F}_{q^2}[x]_{\le{l-1}}$ and $a\in \mathbb{F}_q$. One can verify that $\psi(\beta)=\alpha^{q^i}\left({p(\lambda)}^{q^i} + a\lambda^{({q^i})l}\right)$ $\in S''$. Note that $S''$ is a vector space over $\mathbb{F}_q$ and $\text{dim}(\psi(S))_{\mathbb{F}_q}=\text{dim}(S'')_{\mathbb{F}_q}$. Consequently, $\psi(S)=S''$. 
			\qedhere\end{itemize}
	\end{proof}
\end{theorem}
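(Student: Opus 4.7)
The plan is to reduce the problem entirely to computing $\psi(S)$ explicitly, by combining Lemma \ref{...} with the known structure of the normalizer. First I would apply part (2) of Lemma \ref{...}: because $\psi$ normalizes $\mathbb{F}_{q^n}^*$, the conjugate $\psi\mathbb{F}_{q^n}^*\psi^{-1}$ coincides with $\mathbb{F}_{q^n}^*$, so $\psi(\mathcal{C}) = \psi(\mathrm{Orb}(S)) = \mathrm{Orb}(\psi(S))$. Thus the entire theorem becomes the task of giving an explicit formula for $\psi(S)$ as a function of the parameters of $\psi$.

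The next step is to invoke the isomorphism $N_{GL_n(q)}(\mathbb{F}_{q^n}^*) \cong \mathrm{Gal}(\mathbb{F}_{q^n}|\mathbb{F}_q) \rtimes \mathbb{F}_{q^n}^*$ from \cite[Theorem 2.4]{Automorphism}, which lets one write every $\psi$ in the normalizer as the $\mathbb{F}_q$-linear map $x \mapsto \alpha^{q^i} x^{q^i}$ for some $\alpha \in \mathbb{F}_{q^n}^*$ and some $i$. For Family (1), where $S = \langle 1,\lambda,\dots,\lambda^{k-1}\rangle_{\mathbb{F}_q}$, evaluating $\psi$ on the spanning set gives $\psi(\lambda^{j}) = \alpha^{q^i}(\lambda^{q^i})^{j}$; $\mathbb{F}_q$-linearity then yields $\psi(S) = \alpha^{q^i}\langle 1,\lambda^{q^i},\dots,\lambda^{(k-1)q^i}\rangle_{\mathbb{F}_q}$ at once. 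The reduction of the index $i$ from $\{0,\dots,n-1\}$ to $\{0,\dots,t-1\}$ is immediate from $\lambda \in \mathbb{F}_q(\lambda) = \mathbb{F}_{q^t}$, which forces $\lambda^{q^{i+t}} = \lambda^{q^i}$.

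For Family (2) the argument is the same in spirit but requires a short check on coefficients. Take a generic $\beta = p(\lambda) + a\lambda^{l} \in S$ with $p(x) \in \mathbb{F}_{q^2}[x]_{\le l-1}$ and $a \in \mathbb{F}_q$, and compute
\[
\psi(\beta) = \alpha^{q^i}\bigl(p(\lambda)^{q^i} + a^{q^i}\lambda^{l q^i}\bigr) = \alpha^{q^i}\bigl(p^{(q^i)}(\lambda^{q^i}) + a\,\lambda^{l q^i}\bigr),
\]
where $p^{(q^i)}$ denotes the polynomial obtained by raising each coefficient of $p$ to the $q^i$-th power. Since $\mathbb{F}_{q^2}$ is a subfield of $\mathbb{F}_{q^n}$ and is therefore closed under every Frobenius power, one has $p^{(q^i)} \in \mathbb{F}_{q^2}[x]_{\le l-1}$; and since $a \in \mathbb{F}_q$ one has $a^{q^i}=a$. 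Hence $\psi(\beta) \in S''$, which gives the inclusion $\psi(S) \subseteq S''$; a dimension count over $\mathbb{F}_q$ then upgrades this to equality. The bound $i \in \{0,\dots,2t-1\}$ now follows from $[\mathbb{F}_{q^2}(\lambda):\mathbb{F}_q] = 2t$.

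The only place where care is needed, and what I would view as the main (mild) obstacle, is the verification in Family (2) that the Frobenius twist $x \mapsto x^{q^i}$ respects the direct-sum decomposition $\langle 1,\dots,\lambda^{l-1}\rangle_{\mathbb{F}_{q^2}} \oplus \lambda^l\mathbb{F}_q$; this hinges on the fact that both coefficient fields $\mathbb{F}_{q^2}$ and $\mathbb{F}_q$ are Frobenius-stable, which is automatic but must be stated explicitly to justify that $p^{(q^i)}$ remains in $\mathbb{F}_{q^2}[x]$. Everything else is a direct unfolding of the normalizer action.
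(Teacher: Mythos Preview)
Your proposal is correct and follows essentially the same route as the paper's own proof: reduce to computing $\psi(S)$ via Lemma~\ref{...} and the normalizer conjugation, invoke the structure $N_{GL_n(q)}(\mathbb{F}_{q^n}^*)\cong\mathrm{Gal}(\mathbb{F}_{q^n}|\mathbb{F}_q)\rtimes\mathbb{F}_{q^n}^*$ from \cite[Theorem~2.4]{Automorphism} to write $\psi(x)=\alpha^{q^i}x^{q^i}$, and then evaluate on a generic element of $S$, finishing with an inclusion-plus-dimension argument in part~(2). Your write-up is in fact slightly more explicit than the paper's in justifying the index reduction $i\in\{0,\dots,t-1\}$ (resp.\ $\{0,\dots,2t-1\}$) and the Frobenius-stability of the coefficient fields $\mathbb{F}_q$ and $\mathbb{F}_{q^2}$.
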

\begin{remark}In particular, $\psi(\mathcal{C})=\mathcal{C}$ if $k=1$ or $i=0$ in case $(1)$. $S=\alpha^{q^i}\mathbb{F}_q$ when $k=1$ and $S=\alpha\langle 1,\lambda,...,\lambda^{k-1}\rangle_{\mathbb{F}_q}$ when $i=0$. Therefore, $\psi(\mathcal{C})=\mathcal{C}$.
\end{remark}

\section{$r$-FWS codes}\label{sec5}
\noindent In \cite{jcta}, the authors introduced the definition of $r$-FWS one-orbit cyclic subspace codes and proposed the problem of determining the classification of $r$-FWS one-orbit cyclic subspace codes. In this section, utilizing the classification results in Lemma $\ref{classification}$ for $r$-FWS one-orbit cyclic subspace codes when $\omega_{2}>0$, we obtain the necessary and sufficient conditions for the existence of $r$-FWS one-orbit cyclic subspace codes which is divided into two classes and present the weight distribution of the second family of $r$-FWS codes. For the first family of codes in Lemma $\ref{classification}$, the result can be directly obtained from Lemma $\ref{thjcta}$. For the second family of codes in Lemma $\ref{classification}$, we prove that no 
$r$-FWS one-orbit cyclic subspace codes exist in most cases and further establish the necessary and sufficient conditions for their existence.

\begin{lemma}\cite[Theorem 5.3]{jcta}\label{classification}
	Let $\mathcal{C}=Orb({S})\subseteq \mathcal{G}(n,k)$, $\omega (\mathcal{C})>0$, $\lambda \in \mathbb{F}_{q^n}\setminus \mathbb{F}_{q}$ such that $d(S,\lambda S)=2$ and $t=[\mathbb{F}_{q}(\lambda):\mathbb{F}_{q}]$. One of the following occurs:
	\begin{itemize}
		\item[$(1)$] if $k<t$ then $S=b\left<1,\lambda,...,\lambda^{k-1}\right>_{\mathbb{F}_q}$, for some $b\in \mathbb{F}_{q^n}^*$;
		\item[$(2)$] if $k\ge t+1$, then $S=\overline S  \oplus b\left<1,\lambda,...,\lambda^{m-1}\right>_{\mathbb{F}_q}$, where $\overline S$ is an $\mathbb{F}_{q^t}$-subspace of dimension $l>0$, $b\in \mathbb{F}_{q^n}^*$, $b{F_{{q^t}}} \cap \overline S  = \{ 0\},\ k = tl + m$ with $0<m<t$.
		
	\end{itemize}
\end{lemma}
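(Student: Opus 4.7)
The strategy is to analyze the descending chain of $\mathbb{F}_q$-subspaces
\[
U_i := \{s \in S : \lambda^j s \in S \text{ for } 1 \le j \le i\} = S \cap \lambda^{-1}S \cap \cdots \cap \lambda^{-i}S,
\]
whose stable limit $U_\infty$ will serve as $\overline{S}$. The hypothesis $d(S,\lambda S)=2$ gives $\dim(\lambda S/(\lambda S \cap S))=1$, and for each $i$ the $\mathbb{F}_q$-linear map $U_i \to \mathbb{F}_{q^n}/S$ sending $s \mapsto \lambda^{i+1}s + S$ has kernel $U_{i+1}$ and image inside the $1$-dimensional subspace $\lambda S/(S \cap \lambda S) \subseteq \mathbb{F}_{q^n}/S$ (since $\lambda^i s \in S$ already). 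Hence $\dim U_i - \dim U_{i+1} \le 1$, and $U_j = U_{j+1}$ occurs exactly when $U_j$ is closed under multiplication by $\lambda$, equivalently an $\mathbb{F}_q(\lambda) = \mathbb{F}_{q^t}$-subspace. Thus $U_\infty$ is the largest $\mathbb{F}_{q^t}$-subspace of $S$, $\dim U_\infty = tl$ for some $l \ge 0$, and the strictly decreasing portion of the chain has length $k - tl$ with unit drops.

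For case (1), $k < t$: the bound $tl \le \dim U_\infty \le k < t$ forces $l = 0$, so $U_\infty = 0$, the chain has $k$ unit drops and $\dim U_{k-1} = 1$. Any nonzero $b \in U_{k-1}$ satisfies $b, \lambda b, \ldots, \lambda^{k-1}b \in S$; these are $\mathbb{F}_q$-independent since $1, \lambda, \ldots, \lambda^{k-1}$ are (as $k \le t$), and by dimension $S = b\langle 1, \lambda, \ldots, \lambda^{k-1}\rangle_{\mathbb{F}_q}$.

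For case (2), $k \ge t+1$, set $\overline{S} := U_\infty$ and $m := k - tl$. First rule out $\overline{S} = 0$: otherwise $\dim U_{k-1} = 1$, and a nonzero $b \in U_{k-1}$ satisfies $b, \lambda b, \ldots, \lambda^{t-1}b \in S$ (using $k-1 \ge t$), forcing $\mathbb{F}_{q^t}b \subseteq S$, so $b \in U_\infty = 0$, a contradiction. Next, $m > 0$ (else $S = \overline{S}$ would give $\lambda S = S$, contradicting $d(S,\lambda S) = 2$), and $m < t$: working in the induced filtration $V_i := U_i/\overline{S}$, which inherits codimension-$1$ drops because $\overline{S} \subseteq U_i$ is $\mathbb{F}_{q^t}$-stable, any lift $s$ of a nonzero element of $V_{m-1}$ has $\lambda^j s \in S$ for $0 \le j \le m-1$; if $m \ge t$ then this forces $\mathbb{F}_{q^t}s \subseteq S$, hence $s \in \overline{S}$, contradicting the choice of $s$. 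Such a lift $s$ now satisfies that $\lambda^i s + \overline{S}$ is nonzero in the $1$-dimensional graded piece $V_{m-1-i}/V_{m-i}$ for each $i = 0, \ldots, m-1$, so $\{\lambda^i s + \overline{S}\}_{i=0}^{m-1}$ is an $\mathbb{F}_q$-basis of $V$; setting $b := s$ yields $S = \overline{S} \oplus b\langle 1, \lambda, \ldots, \lambda^{m-1}\rangle_{\mathbb{F}_q}$. Finally $b\mathbb{F}_{q^t} \cap \overline{S} = \{0\}$, because $b\mu \in \overline{S}$ with $\mu \neq 0$ would give $b \in \overline{S}$ by $\mathbb{F}_{q^t}$-stability of $\overline{S}$, contradicting $s \notin \overline{S}$.

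The main obstacle I anticipate is case (2): establishing $m < t$ and the $\mathbb{F}_q$-independence of $b, \lambda b, \ldots, \lambda^{m-1}b$ modulo $\overline{S}$. Both rely on the quotient filtration $V_i = U_i/\overline{S}$ genuinely inheriting codimension-$1$ drops from the $U_i$ chain, which requires the $\mathbb{F}_{q^t}$-stability of $\overline{S}$ to ensure $\overline{S} \subseteq U_i$ for all $i$ and that the map $s \mapsto \lambda^{i+1}s + S$ descends appropriately; once that is pinned down, the graded-piece argument becomes routine.
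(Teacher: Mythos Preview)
The paper does not supply its own proof of this lemma: it is quoted directly as \cite[Theorem~5.3]{jcta} and used as a black box, so there is no in-paper argument to compare against.

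That said, your filtration argument is correct and self-contained. The key facts all check out: the map $U_i\to(\lambda S+S)/S$, $s\mapsto \lambda^{i+1}s+S$, has kernel $U_{i+1}$ and one-dimensional target, giving unit drops; once $U_j=U_{j+1}$ the subspace is $\lambda$-stable (hence $\mathbb{F}_q[\lambda]=\mathbb{F}_{q^t}$-stable) and the chain freezes, so $U_\infty$ is the maximal $\mathbb{F}_{q^t}$-submodule of $S$ with $\dim_{\mathbb{F}_q}U_\infty=tl$. In case~(1), $tl\le k<t$ forces $l=0$ and the $k$ unit drops produce the polynomial basis from any $0\ne b\in U_{k-1}$. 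In case~(2), your exclusion of $l=0$ (via $\mathbb{F}_{q^t}b\subseteq S$ when $k-1\ge t-1$) and the bound $m<t$ (same trick applied to a lift of $V_{m-1}$) are fine; since the quotient chain $V_i=U_i/\overline S$ inherits strict unit drops down to $V_m=0$, any lift $b$ of a generator of $V_{m-1}$ satisfies $\lambda^i b\in U_{m-1-i}\setminus U_{m-i}$, so the images $\lambda^i b+\overline S$ hit distinct one-dimensional graded pieces and are independent. The last intersection claim is immediate from $\mathbb{F}_{q^t}$-stability of $\overline S$. No gaps.
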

In the following lemma, the weight distribution of the first family of FWS codes in Theorem \ref{Th1.2} is given. 
\begin{lemma}\cite[Theorem 4.4]{jcta}\label{thjcta}
	Let \(\lambda \in \mathbb{F}_{q^n} \setminus \mathbb{F}_q\) and let \(t := [\mathbb{F}_q(\lambda) : \mathbb{F}_q]\). Let \(S = \langle 1, \lambda, \ldots, \lambda^{k-1} \rangle_{\mathbb{F}_q}\) with \(1 \leq k < t\). For the code \(\mathcal{C} = \mathrm{Orb}(S)\), we have the following:
	
	If \(k \leq t/2\), then
	\[
	\omega_{2i}(\mathcal{C}) = \begin{cases}
		(q + 1)q^{2i - 1}, & \text{if } i \in \{1, \ldots, k - 1\}, \\
		{} &{}\\
		\displaystyle \frac{q^n - q^{2k - 1}}{q - 1}, & \text{if } i = k.
	\end{cases}
	\]
	
	If \(k > t/2\), then
	\[
	\omega_{2i}(\mathcal{C}) = \begin{cases}
		(q + 1)q^{2i - 1}, & \text{if } i \in \{1, \ldots, t - k - 1\}, \\
		{} &{}\\
		\displaystyle \frac{q^t - q^{2(t - k) - 1}}{q - 1}, & \text{if } i = t - k, \\
		{} &{}\\
		0, & \text{if } i \in \{t - k + 1, \ldots, k - 1\} \text{ and } t - k + 1 \ne k, \\
		{} &{}\\
		\displaystyle \frac{q^n - q^t}{q - 1}, & \text{if } i = k.
	\end{cases}
	\]
\end{lemma}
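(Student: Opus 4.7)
My plan is to follow the same template the excerpt uses for Theorem \ref{thdistribution}: first establish a dimension formula in the spirit of Lemma \ref{prop4.10} expressing $\dim_{\mathbb{F}_q}(S \cap \mu S)$ in terms of a rational-function representation $\mu = p_1(\lambda)/p_2(\lambda)$, and then count the resulting cosets via Lemma \ref{probility}. The first step is to pin down the stabilizer $H(S) = \mathbb{F}_q$, so that the map $\mu \mathbb{F}_q^* \mapsto \mu S$ is a bijection and one can count codewords by counting cosets and multiplying by $q-1$. This follows from $k < t$: any $\mu \in H(S)$ sends $1 \in S$ to $\mu \in S \subseteq \mathbb{F}_q(\lambda)$, and iterating $\mu^j \in S$ forces $\mathbb{F}_q[\mu] \subseteq S$, after which a dimension comparison with $[\mathbb{F}_q(\lambda):\mathbb{F}_q] = t > k$ pushes $\mu$ back into $\mathbb{F}_q$.

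The core computation is to describe $S \cap \mu S$ when $\mu \in \mathbb{F}_q(\lambda)^*$ is written as $p_1(\lambda)/p_2(\lambda)$ with $\gcd(p_1,p_2) = 1$, $p_1$ monic, and $r = \max(\deg p_1,\deg p_2) < t$. Identifying $S$ with $\mathbb{F}_q[x]_{<k}$ via evaluation at $\lambda$ and using that the evaluation map $\mathbb{F}_q[x]_{<t} \to \mathbb{F}_q(\lambda)$ is a bijection, an element $p(\lambda) \in S$ lies in $\mu S$ precisely when $p_1 \cdot p \equiv p_2 \cdot q \pmod{m(x)}$ for some $q \in \mathbb{F}_q[x]_{<k}$, where $m$ is the minimal polynomial of $\lambda$ over $\mathbb{F}_q$. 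In the regime where the relevant degree sums remain below $t$ the congruence is an honest equality, and coprimality yields $p_2 \mid p$; after symmetrising in $\mu, \mu^{-1}$ this produces $\dim_{\mathbb{F}_q}(S \cap \mu S) = k - r$. Counting monic coprime pairs with $\max(\deg p_1,\deg p_2) = i$ via Lemma \ref{probility}, dividing by $q-1$, and adding the contribution of $\mu \notin \mathbb{F}_q(\lambda)$ to $\omega_{2k}$ (where a linear-independence argument using the $\mathbb{F}_q(\lambda)$-span forces $S \cap \mu S = \{0\}$) should reproduce the $k \leq t/2$ formula, with $\omega_{2i} = (q+1)q^{2i-1}$ and $\omega_{2k} = (q^n - q^{2k-1})/(q-1)$.

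The main obstacle is the regime $k > t/2$. Here the degree constraint $r + (k-1) < t$ fails, so polynomial multiplication must genuinely be reduced modulo $m$, and the clean correspondence between $\dim(S \cap \mu S)$ and $\max(\deg p_1,\deg p_2)$ collapses. I would show that in this range, for every $\mu \in \mathbb{F}_q(\lambda)^*$, one has $\dim_{\mathbb{F}_q}(S \cap \mu S) \geq 2k - t$, by bounding $\dim_{\mathbb{F}_q}(S + \mu^{-1} S) \leq t$ inside $\mathbb{F}_q(\lambda)$ and invoking the standard identity. This uniform lower bound immediately gives $\omega_{2i}(\mathcal{C}) = 0$ for $t - k < i < k$, so only $i \in \{1,\ldots,t-k-1\}$, $i = t-k$, and $i = k$ can survive. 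A direct count reproduces the values $(q+1)q^{2i-1}$ for $i < t-k$; the boundary value $(q^t - q^{2(t-k)-1})/(q-1)$ at $i = t-k$ emerges as the count of all remaining elements of $\mathbb{F}_q(\lambda)^*$ not yet accounted for; and finally $(q^n - q^t)/(q-1)$ at $i = k$ comes from all $\mu \notin \mathbb{F}_q(\lambda)^*$. The technical heart of the proof is justifying the dimension bound $\dim(S + \mu^{-1} S) \leq t$ and the saturation claim at $i = t-k$, which together require a careful analysis of how $S$ sits inside $\mathbb{F}_q(\lambda)$ as a polynomial-degree filtration.
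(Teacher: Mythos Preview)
The paper does not prove this statement; it is quoted without proof as Theorem~4.4 of \cite{jcta} and used only as an input to later results (Theorems~\ref{main result3}, \ref{thm2}--\ref{thm4}). There is therefore no in-paper argument to compare your proposal against.

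For what it is worth, your outline is essentially the proof given in \cite{jcta}: the dimension formula $\dim_{\mathbb{F}_q}(S\cap\mu S)=k-r$ for $\mu=p_1(\lambda)/p_2(\lambda)$ in lowest terms with $r=\max(\deg p_1,\deg p_2)$ is their Proposition~4.1, and the enumeration via coprime pairs (Lemma~\ref{probility} here) is how they finish. Your handling of the range $k>t/2$ through the trivial bound $\dim_{\mathbb{F}_q}(S+\mu S)\le t$ inside $\mathbb{F}_q(\lambda)$, followed by a complement count, is also the argument there. One point you should make explicit in the case $k\le t/2$: if $S\cap\mu S\neq\{0\}$, say $a(\lambda)=\mu\, b(\lambda)$ with $a,b\in\mathbb{F}_q[x]_{<k}$, then $\mu=a(\lambda)/b(\lambda)$ already has a representation of degree~$<k$, so after clearing common factors $r<k\le t-k$ and the condition $r+(k-1)<t$ is automatically satisfied; this is what makes the ``no wraparound modulo $m$'' assumption legitimate and the complement count at $i=k$ honest.
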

The following result showed, under certain assumptions, some missing values in the weight distribution of $\mathcal{C}=\text{Orb}(S)$, where $S$ is of the family $2$ in \cite[Theorem 6.1]{jcta}.
\begin{lemma}\cite[Theorem 6.4]{jcta}\label{th6.4}
	Let \((\omega_2(\mathcal {C}), \ldots, \omega_{2k}(\mathcal {C}))\) be the weight distribution of \(\mathcal {C}\). The following hold:
	\begin{enumerate}
		\item[$(1)$] If \(m < t - 1\) and \(H(Y) = \mathbb{F}_{q^t}\), then \(\omega_{2m+2}(\mathcal {C}) = 0\);
		
		\item[$(2)$] If \(m > \frac{t+1}{2}\), then \(\omega_{2(k-j)}(\mathcal {C}) = 0\) for any \(j \in \{1, \ldots, 2m - t - 1\}\);
		
		\item[$(3)$] If \(\mathbb{F}_{q^t} \subsetneq H(Y)\), then \(\omega_{2(k-j)}(\mathcal {C}) = 0\) for any \(j \in \{1, \ldots, 2m - 1\}\);
		
		\item[$(4)$] If \(t = 3\), \(m = 2\), and \(H(Y) = \mathbb{F}_{q^3}\), then \(\omega_4(\mathcal {C}) = 0\).
	\end{enumerate}
\end{lemma}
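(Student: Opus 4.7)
The plan is to show, for each of the four cases, that no $\mu \in \mathbb{F}_{q^n}^*$ produces an intersection $S \cap \mu S$ of the asserted forbidden $\mathbb{F}_q$-dimension. The starting point is the decomposition
\[
S = \overline{S} \oplus b\langle 1, \lambda, \ldots, \lambda^{m-1}\rangle_{\mathbb{F}_q}
\]
from Lemma \ref{classification}, together with the $\mathbb{F}_{q^t}$-linear envelope $Y \supseteq S$ (so that $\overline{S}$ and $b\mathbb{F}_{q^t}$ both lie in $Y$, giving $\dim_{\mathbb{F}_q} Y = k + (t - m)$). The central fact is that $Y \cap \mu Y$ is $H(Y)$-linear, so its $\mathbb{F}_q$-dimension is divisible by $[H(Y):\mathbb{F}_q]$, while the gap
\[
0 \leq \dim_{\mathbb{F}_q}(Y \cap \mu Y) - \dim_{\mathbb{F}_q}(S \cap \mu S) \leq 2(t - m)
\]
is controlled by the codimension of $S$ in $Y$ and of $\mu S$ in $\mu Y$. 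This bracketing, combined with Lemma \ref{congruennce} applied to $S$, reduces the problem to elementary arithmetic once $[H(Y):\mathbb{F}_q]$ is pinned down by the hypotheses.

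For part (1), with $H(Y) = \mathbb{F}_{q^t}$, the value $\dim_{\mathbb{F}_q}(Y \cap \mu Y)$ lies in $\{0, t, 2t, \ldots\}$, and hitting the target $\dim_{\mathbb{F}_q}(S \cap \mu S) = tl - 1$ forces $\dim_{\mathbb{F}_q}(Y \cap \mu Y) \in \{tl, t(l+1)\}$; one checks via the gap bound that each option forces $m \geq t - 1$, contradicting $m < t - 1$. Parts (2) and (3) follow the same route: the extra room from $m > (t+1)/2$ (respectively from $[H(Y):\mathbb{F}_q] > t$) shrinks the admissible residue class of $\dim_{\mathbb{F}_q}(Y \cap \mu Y)$ near $k$, and the gap inequality then rules out exactly the $2m - t - 1$ (respectively $2m - 1$) values immediately below $k$. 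For part (4), with $t = 3$, $m = 2$ and $H(Y) = \mathbb{F}_{q^3}$, the multiples of $3$ inside $[k - 2(t-m), k] = [k-2, k]$ leave only $\dim_{\mathbb{F}_q}(Y \cap \mu Y) \in \{3l, 3(l+1)\}$ as options, and a direct check using the condition $b\mathbb{F}_{q^t} \cap \overline{S} = \{0\}$ excludes $\dim_{\mathbb{F}_q}(S \cap \mu S) = k - 2$.

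The main obstacle will be making precise the ``tail alignment'' argument: the tail $b\langle 1, \lambda, \ldots, \lambda^{m-1}\rangle_{\mathbb{F}_q}$ is not $\mathbb{F}_{q^t}$-linear, so one must carefully separate the cases in which $\mu \in H(Y)$ (where $\mu Y = Y$ but $\mu S$ can shift inside $Y$) from those in which $\mu$ only partially preserves the $\mathbb{F}_{q^t}$-structure. The disjointness hypothesis $b\mathbb{F}_{q^t} \cap \overline{S} = \{0\}$ from Lemma \ref{classification} is exactly what prevents a near-degenerate $\mu$ from supplying the missing intersection dimension through a coincidental overlap of the tails. Once the $2(t-m)$-gap inequality is sharpened under each hypothesis on $H(Y)$, the four claims reduce to modular arithmetic on the allowed values of $\dim_{\mathbb{F}_q}(Y \cap \mu Y)$.
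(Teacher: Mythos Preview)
The paper does not give its own proof of this lemma: it is quoted verbatim from \cite[Theorem 6.4]{jcta} and used as a black box in the proof of Theorem~\ref{main result3}. So there is no ``paper's proof'' to compare against; the relevant argument lives in the cited article.

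That said, your plan is the right skeleton and, for parts (2) and (3), it actually works as stated. The two-sided gap
\[
\dim_{\mathbb{F}_q}(S\cap\mu S)\ \le\ \dim_{\mathbb{F}_q}(Y\cap\mu Y)\ \le\ \dim_{\mathbb{F}_q}(S\cap\mu S)+2(t-m),
\]
combined with $[H(Y):\mathbb{F}_q]\mid \dim_{\mathbb{F}_q}(Y\cap\mu Y)$, leaves no admissible value of $\dim_{\mathbb{F}_q}(Y\cap\mu Y)$ once $j\le 2m-t-1$ (for (2)) or $j\le 2m-1$ and $[H(Y):\mathbb{F}_q]\ge 2t$ (for (3)). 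These two items really do reduce to arithmetic.

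Parts (1) and (4) do not, and here your sketch has a genuine gap. In (1), if $\dim_{\mathbb{F}_q}(S\cap\mu S)=tl-1$ then the bracket gives $\dim_{\mathbb{F}_q}(Y\cap\mu Y)\in\{tl,\,t(l+1)\}$. The second value forces $\mu\in H(Y)=\mathbb{F}_{q^t}$, and then $\overline{S}\subseteq S\cap\mu S$ already gives $\dim_{\mathbb{F}_q}(S\cap\mu S)\ge tl$, a contradiction; but the first value $\dim_{\mathbb{F}_q}(Y\cap\mu Y)=tl$ is \emph{not} excluded by the gap inequality (the gap is $1\le 2(t-m)$, which is always true), so your assertion that ``each option forces $m\ge t-1$'' is unjustified. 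Eliminating this case, and likewise the residual case $\dim_{\mathbb{F}_q}(Y\cap\mu Y)=3l$ in (4), requires a finer structural argument about how the tail $b\langle 1,\lambda,\ldots,\lambda^{m-1}\rangle_{\mathbb{F}_q}$ can meet $\mu S$ inside $Y\cap\mu Y$; the disjointness $b\mathbb{F}_{q^t}\cap\overline{S}=\{0\}$ is indeed the key ingredient, but you have not indicated how it is used, and the ``sharpening'' you allude to is not carried out. For a self-contained write-up you would need to supply that missing step; otherwise, cite \cite[Theorem 6.4]{jcta} as the present paper does.
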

We will provide some useful facts that will be used in the following theorems.
\begin{lemma}\label{lemcom}
	If \( b \in \mathbb{F}_{q^t}^* \), then
	\[
	S \cap bS = \overline{S} \oplus \left( \langle 1, \lambda, \ldots, \lambda^{m-1} \rangle_{\mathbb{F}_q} \cap b \langle 1, \lambda, \ldots, \lambda^{m-1} \rangle_{\mathbb{F}_q} \right).
	\]
	
	\begin{proof}
		Note that $\overline{S}=b\overline{S}$ with $b\in \mathbb{F}_{q^t}^*$. If \( \mu \in S \cap bS \), then
		\[
		\mu = s_1 + \lambda_1 = s_2 + \lambda_2
		\]
		where \( s_1, s_2 \in \overline{S} \), \( \lambda_1 \in \langle 1, \lambda, \ldots, \lambda^{m-1} \rangle_{\mathbb{F}_q} \), and \( \lambda_2 \in b \langle 1, \lambda, \ldots, \lambda^{m-1} \rangle_{\mathbb{F}_q} \). Therefore,
		\[
		s_1 - s_2 = \lambda_1 - \lambda_2 \in \overline{S} \cap \mathbb{F}_{q^t}.
		\]
		This implies \( s_1 = s_2 \) and \( \lambda_2 = \lambda_1 \). Consequently, \( \mu \) can only be expressed in the form
		\[
		\mu = s + p(\lambda),
		\]
		where \( s \in \overline{S} \), \( p(\lambda) \in \langle 1, \lambda, \ldots, \lambda^{m-1} \rangle_{\mathbb{F}_q} \), and \( p(\lambda) \in b \langle 1, \lambda, \ldots, \lambda^{m-1} \rangle_{\mathbb{F}_q} \).
	\end{proof}
\end{lemma}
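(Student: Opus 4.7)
The plan is to prove the equality by the usual pair of containments, exploiting two facts: that $b \in \mathbb{F}_{q^t}^*$ stabilises the $\mathbb{F}_{q^t}$-subspace $\overline{S}$ (so $bS$ inherits a parallel direct-sum decomposition), and that $\langle 1,\lambda,\ldots,\lambda^{m-1}\rangle_{\mathbb{F}_q}$ sits inside $\mathbb{F}_{q^t}$. This last point is the crucial structural ingredient: since $[\mathbb{F}_q(\lambda):\mathbb{F}_q] = t$, we have $\mathbb{F}_q(\lambda) = \mathbb{F}_{q^t}$, so every power of $\lambda$ lies in $\mathbb{F}_{q^t}$, and the same is true for any $b$-multiple with $b \in \mathbb{F}_{q^t}^*$.

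First I would record that $b\overline{S} = \overline{S}$, whence $bS = \overline{S} \oplus b\langle 1,\lambda,\ldots,\lambda^{m-1}\rangle_{\mathbb{F}_q}$. The containment $\supseteq$ is then immediate: any element of the right-hand side lies in $\overline{S} + \langle 1,\lambda,\ldots,\lambda^{m-1}\rangle_{\mathbb{F}_q} = S$ and simultaneously in $\overline{S} + b\langle 1,\lambda,\ldots,\lambda^{m-1}\rangle_{\mathbb{F}_q} = bS$.

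For $\subseteq$, I would take $\mu \in S \cap bS$ and express it in both decompositions as $\mu = s_1 + p_1(\lambda) = s_2 + p_2(\lambda)$, where $s_i \in \overline{S}$, $p_1(\lambda) \in \langle 1,\lambda,\ldots,\lambda^{m-1}\rangle_{\mathbb{F}_q}$, and $p_2(\lambda) \in b\langle 1,\lambda,\ldots,\lambda^{m-1}\rangle_{\mathbb{F}_q}$. Rearranging gives $s_1 - s_2 = p_2(\lambda) - p_1(\lambda)$; the left side lives in $\overline{S}$ and, by the structural observation above, the right side lives in $\mathbb{F}_{q^t}$. Invoking the condition $\overline{S} \cap \mathbb{F}_{q^t} = \{0\}$ (the specialisation of Lemma \ref{classification}(2) that is in force here, with outer representative equal to $1$) then forces $s_1 = s_2$ and $p_1(\lambda) = p_2(\lambda)$, which places the common polynomial part in the advertised intersection and completes the containment.

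The main subtlety I anticipate is the normalisation: Lemma \ref{classification}(2) only guarantees $b_0\mathbb{F}_{q^t} \cap \overline{S} = \{0\}$ for the outer representative $b_0 \in \mathbb{F}_{q^n}^*$, so one must either absorb $b_0$ into $\overline{S}$ and the polynomial part simultaneously, or simply work in the setting $b_0 = 1$ that the statement of Lemma \ref{lemcom} implicitly assumes. Once that point is clarified, the whole argument reduces to a uniqueness-of-direct-sum-decomposition computation, and no further obstacles arise.
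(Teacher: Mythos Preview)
Your proposal is correct and follows essentially the same route as the paper: both arguments use $b\overline{S}=\overline{S}$, write an element of $S\cap bS$ in the two direct-sum decompositions, and then invoke $\overline{S}\cap\mathbb{F}_{q^t}=\{0\}$ to force the $\overline{S}$-parts and polynomial parts to agree. If anything, your write-up is slightly more complete, since you explicitly verify the $\supseteq$ inclusion and spell out why the polynomial part lies in $\mathbb{F}_{q^t}$; your remark on the normalisation $b_0=1$ is also well placed, as the paper indeed works under that standing assumption in this section.
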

\begin{lemma}\label{lem6.3.1}
	If \(\dim_{\mathbb{F}_{q^t}}(\overline{S} \cap \alpha\overline{S}) = i\), then
	\[
	\dim_{\mathbb{F}_q}(S \cap \alpha S) \leq 2m + ti. 
	\]
	
	\begin{proof}
		We know that \(\dim_{\mathbb{F}_q}(\overline{S} + \alpha\overline{S}) = 2tl - ti\). Since \(\overline{S} + \alpha\overline{S} \subseteq S + \alpha S\), it follows that
		\[
		\dim_{\mathbb{F}_q}(S + \alpha S) \geq \dim_{\mathbb{F}_q}(\overline{S} + \alpha\overline{S}) = 2tl - ti.
		\]
		Therefore,
		\[
		\dim_{\mathbb{F}_q}(S \cap \alpha S) = 2k - \dim_{\mathbb{F}_q}(S + \alpha S) \leq 2k - (2tl - ti) = 2m + ti.
		\]
This completes the proof.
	\end{proof}
\end{lemma}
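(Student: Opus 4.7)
The plan is a short dimension-counting argument that transfers the $\mathbb{F}_{q^t}$-hypothesis on $\overline{S}\cap\alpha\overline{S}$ to an $\mathbb{F}_q$-conclusion on $S\cap\alpha S$ via the inclusion $\overline{S}+\alpha\overline{S}\subseteq S+\alpha S$. First, since $\overline{S}$ is $\mathbb{F}_{q^t}$-linear and scalar multiplication in $\mathbb{F}_{q^n}$ commutes (so $c(\alpha y)=\alpha(cy)$ for $c\in\mathbb{F}_{q^t}$ and $y\in\overline{S}$), the translate $\alpha\overline{S}$ is again an $\mathbb{F}_{q^t}$-subspace of $\mathbb{F}_{q^n}$, of $\mathbb{F}_{q^t}$-dimension $l$. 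Consequently $\overline{S}\cap\alpha\overline{S}$ inherits $\mathbb{F}_{q^t}$-structure, so the hypothesis $\dim_{\mathbb{F}_{q^t}}(\overline{S}\cap\alpha\overline{S})=i$ upgrades to $\dim_{\mathbb{F}_q}(\overline{S}\cap\alpha\overline{S})=ti$.

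Next I would apply the Grassmann identity $\dim(U)+\dim(V)=\dim(U+V)+\dim(U\cap V)$ twice. Applied to $\overline{S}$ and $\alpha\overline{S}$ (each of $\mathbb{F}_q$-dimension $tl$), it yields $\dim_{\mathbb{F}_q}(\overline{S}+\alpha\overline{S})=2tl-ti$. Because $\overline{S}\subseteq S$ forces $\alpha\overline{S}\subseteq\alpha S$, we obtain $\overline{S}+\alpha\overline{S}\subseteq S+\alpha S$, and therefore $\dim_{\mathbb{F}_q}(S+\alpha S)\geq 2tl-ti$. Applying the identity once more to $S$ and $\alpha S$ (each of $\mathbb{F}_q$-dimension $k=tl+m$) then gives
\[
\dim_{\mathbb{F}_q}(S\cap\alpha S)=2k-\dim_{\mathbb{F}_q}(S+\alpha S)\leq 2(tl+m)-(2tl-ti)=2m+ti,
\]
which is exactly the claimed bound.

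There is no real obstacle here: the argument is essentially a two-line dimension count, and the only subtle point is verifying that $\overline{S}\cap\alpha\overline{S}$ carries a well-defined $\mathbb{F}_{q^t}$-vector space structure so that the conversion factor $t$ between $\mathbb{F}_{q^t}$- and $\mathbb{F}_q$-dimensions is legitimate. The bound is essentially tight when the summand $b\langle 1,\lambda,\ldots,\lambda^{m-1}\rangle_{\mathbb{F}_q}$ contributes its full $\mathbb{F}_q$-dimension $m$ to $S\cap\alpha S$; this is precisely the regime in which the lemma will be applied to rule out certain weight values in the forthcoming $r$-FWS analysis of Section \ref{sec5}.
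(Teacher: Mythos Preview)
Your proof is correct and follows essentially the same approach as the paper's own proof: compute $\dim_{\mathbb{F}_q}(\overline{S}+\alpha\overline{S})=2tl-ti$ via Grassmann, use the inclusion $\overline{S}+\alpha\overline{S}\subseteq S+\alpha S$ to lower-bound $\dim_{\mathbb{F}_q}(S+\alpha S)$, and apply Grassmann once more to $S$ and $\alpha S$. Your only addition is the explicit justification that $\alpha\overline{S}$ is $\mathbb{F}_{q^t}$-linear, which the paper leaves implicit.
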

To construct an $r$-FWS code, the intersection dimensions must exhibit highly controlled behavior. In particular, we find that $S$ is either ``close'' to $\overline{S}$ or $Y$ in the sense that $\dim_{\mathbb{F}_q}(\overline{S})\le \dim_{\mathbb{F}_q}(S) \le \dim_{\mathbb{F}_q}(Y)$. Hence the intersection dimension is also ``close'' to the respective intersection dimensions.
Combining \cite[Lemma 6.3 ($4$)]{jcta} and Lemma $\ref{lemcom}$, we derive contradictions that allow us to explicitly identify zero entries in the weight distribution (as done in Lemma $\ref{th6.4}$), which will help us find the bounds of $r$. As a consequence, we show that in most cases, $r$-FWS codes do not exist. 
\begin{theorem}\label{main result3}
	Let $\mathcal{C}$ be a one-orbit cyclic orbit code in $\mathcal{G}_q(n,k)$, and let $\lambda \in \mathbb{F}_{q^n}\setminus \mathbb{F}_{q}$ such that $d(S,\lambda S)=2$ and $t=[\mathbb{F}_{q}(\lambda):\mathbb{F}_{q}]$. Then $\mathcal{C}$ is an $r$-FWS code if and only if $\mathcal{C}=Orb({S})$, where $S$ is one of the following:
	\begin{itemize}
		\item[$(1)$] $S=b\left<1,\lambda,...,\lambda^{k-1}\right>_{\mathbb{F}_q}$ for $\cfrac{t}{2}<k<t=n$ and $r=2k-t$;
		\item[$(2)$] $S=\overline{S}\oplus b\left<1,\lambda,...,\lambda^{m-1}\right>_{\mathbb{F}_q}$ for $t+1\le k \le n$, $r=2m+t(l-1)$, $Y=\mathbb{F}_{q^n}$ and $2m\ge t-1$, where $\overline S$ is an $\mathbb{F}_{q^t}$-subspace of dimension $l>0$, $\mathbb{F}_{q^t}\cap b\overline S=\{0\}$, $k=tl+m$ with $0<m<t$ and $Y=\left<S\right>_{\mathbb{F}_{q^t}}$.
	\end{itemize}
\end{theorem}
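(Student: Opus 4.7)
The plan is to follow the structural classification already established in Lemma $\ref{classification}$. Any $r$-FWS code with $r<k$ satisfies $\omega_{2}(\mathcal{C})>0$ (otherwise the first entry of the weight distribution would be among the zero trailing block), so the hypothesis of Lemma $\ref{classification}$ is available and the subspace $S$ must belong to one of its two families. I would then treat the two families separately, matching the two cases in the statement.

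For Family (1), where $S=b\langle 1,\lambda,\ldots,\lambda^{k-1}\rangle_{\mathbb{F}_q}$ with $k<t$, the weight distribution is given explicitly in Lemma $\ref{thjcta}$. If $k\leq t/2$ every $\omega_{2i}$ is nonzero, so the code is $0$-FWS and contributes nothing new. If $t/2<k<t$, the lemma produces zeros exactly at the middle positions $i\in\{t-k+1,\ldots,k-1\}$, while $\omega_{2(t-k)}$ and $\omega_{2k}$ remain nonzero in general. For these zeros to coincide with the trailing block demanded by the $r$-FWS definition, $\omega_{2k}$ must also vanish; since $\omega_{2k}=(q^n-q^t)/(q-1)$, this forces $n=t$, and then the zero block is $\{t-k+1,\ldots,k\}$, yielding $r=2k-t$ as stated.

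For Family (2), where $S=\overline{S}\oplus b\langle 1,\lambda,\ldots,\lambda^{m-1}\rangle_{\mathbb{F}_q}$ with $k=tl+m$, the analysis is more delicate. I would combine Lemma $\ref{lemcom}$ (which describes $S\cap bS$ for $b\in \mathbb{F}_{q^t}^*$) with Lemma $\ref{lem6.3.1}$ (which yields the bound $\dim_{\mathbb{F}_q}(S\cap\alpha S)\leq 2m+ti$) to constrain which intersection dimensions can occur. The decisive observation is that items (1), (3), (4) of Lemma $\ref{th6.4}$ each force a zero in a position that is \emph{not} at the extreme end of the weight distribution, and therefore are incompatible with the $r$-FWS definition; this immediately rules out the case $\mathbb{F}_{q^t}\subsetneq H(Y)$, the sporadic triple $(t,m,H(Y))=(3,2,\mathbb{F}_{q^3})$, and the configuration in item (1). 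What survives is $H(Y)=\mathbb{F}_{q^t}$, i.e.\ $Y=\mathbb{F}_{q^n}$, combined with the inequality $2m\geq t-1$ extracted by comparing the zero block predicted in item (2) of Lemma $\ref{th6.4}$ with the trailing-block requirement; a direct count then identifies $r$ with $2m+t(l-1)$.

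The main obstacle I expect is the sufficiency direction in Family (2): one must show that for every $i\in\{1,\ldots,k-r\}$ there actually exists $\alpha\in\mathbb{F}_{q^n}^{*}$ with $\dim_{\mathbb{F}_q}(S\cap\alpha S)=k-i$, so no middle entry of the weight distribution is inadvertently zero. Lemmas $\ref{lemcom}$ and $\ref{lem6.3.1}$ provide only an upper bound and a precise description for $b\in\mathbb{F}_{q^t}^{*}$, so filling in all intermediate values will require constructing explicit $\alpha$'s realizing each dimension, presumably by mimicking the polynomial-ratio description used in Lemma $\ref{prop4.10}$ and adapting it to the mixed $\mathbb{F}_{q^t}/\mathbb{F}_{q}$ structure of $S$. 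Pinning down where exactly the bound of Lemma $\ref{lem6.3.1}$ is tight as a function of $l$, $m$ and the hypothesis $2m\geq t-1$ is the technical core; once that is established, both the existence claims and the final value $r=2m+t(l-1)$ fall out of the dimension count $k-r=t-m$.
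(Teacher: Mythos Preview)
Your handling of Family~(1) is correct and matches the paper. In Family~(2), however, there is a genuine error in the necessity direction. You write that item~(3) of Lemma~$\ref{th6.4}$ ``rules out $\mathbb{F}_{q^t}\subsetneq H(Y)$'' and that ``what survives is $H(Y)=\mathbb{F}_{q^t}$, i.e.\ $Y=\mathbb{F}_{q^n}$.'' These two conditions are incompatible: when $Y=\mathbb{F}_{q^n}$ one has $H(Y)=\mathbb{F}_{q^n}\supsetneq\mathbb{F}_{q^t}$, so the target case in part~(2) of the theorem is precisely one of the cases you say you have eliminated. Moreover, the zeros furnished by item~(3) sit at positions $k-1,\ldots,k-2m+1$, i.e.\ at the \emph{top} of the distribution, and are entirely consistent with a trailing zero block; when $Y=\mathbb{F}_{q^n}$ they are in fact part of that block. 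So item~(3) does not by itself exclude anything, and your ``decisive observation'' fails.

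The paper accordingly does not use Lemma~$\ref{th6.4}$ as a one-shot elimination device. To rule out $\mathbb{F}_{q^t}\subsetneq H(Y)$ with $Y\neq\mathbb{F}_{q^n}$ it manufactures an additional zero at the non-trailing position $st(j-1)+1$ (where $s=\dim_{\mathbb{F}_{q^t}}H(Y)\geq 2$ and $j=(l+1)/s$) and shows that avoiding this forces $H(Y)=\mathbb{F}_{q^n}$. To rule out $H(Y)=\mathbb{F}_{q^t}$ with $m<t-1$ (respectively $m=t-1$, $t\geq 3$) it first extracts from item~(1) (resp.\ item~(4) or its analogue) only the lower bound $r\geq tl$ (resp.\ $r\geq k-1$), and then argues that this bound combined with $S\cap\gamma S\subseteq Y\cap\gamma Y$ forces $S\cap\gamma S=Y\cap\gamma Y$ for every $\gamma\notin\mathbb{F}_{q^t}$, whence $\mathbb{F}_{q^t}\subseteq S$, contradicting $\overline{S}\cap\mathbb{F}_{q^t}=\{0\}$. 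These contradiction arguments are the missing core of the proof; the zeros of Lemma~$\ref{th6.4}$ alone only bound $r$ from below and do not establish non-existence.
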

\begin{proof}
(1)	The proof of the first part can be seen from Lemma $\ref{thjcta}$. 
	
(2) Let $n_i=\left| \alpha S: \alpha \in \mathbb{F}_{q^n}^*, dim(S\cap \alpha S)=i\right|$ and $S_m=\left<1,\lambda,...,\lambda^{m-1}\right>_{\mathbb{F}_q}$. Then $Y=\left<S\right>_{\mathbb{F}_{q^t}}=\overline{S}+b\left<S_m\right>_{\mathbb{F}_{q^t}}=\overline S \oplus b\mathbb{F}_{q^t}\in \mathcal{G}_{q^t}({n}/{t},l+1)$. Without loss of generality, we assume that $b=1$ in $(2)$. The proof of the second part is divided into four cases:

{\bf Case $1$: $m<t-1$ and $\mathbb{F}_{q^t}=H(Y)$.} By Lemma $\ref{th6.4}$ $(1)$, we have \( n_{k - m - 1} = 0 \), which implies \( r \geq k - m = tl \). There exists \( \gamma \in \mathbb{F}_{q^n} \setminus \mathbb{F}_{q^t} \) such that \( \dim_{\mathbb{F}_{q^t}}(Y \cap \gamma Y) \leq l \) since $H(Y)=\mathbb{F}_{q^t}$ and $t<k\le n$, it follows that
	\[
	\dim_{\mathbb{F}_q}(S \cap \gamma S) \leq \dim_{\mathbb{F}_q}(Y \cap \gamma Y) \leq tl,
	\]
	which means $r=tl$. Otherwise, it would not be an \( r \)-FWS code. Moreover we have
	\[
	tl \leq \dim_{\mathbb{F}_q}(S \cap \gamma S) \leq \dim_{\mathbb{F}_q}(Y \cap \gamma Y) \leq tl,
	\]
	which implies \( \dim_{\mathbb{F}_q}(S \cap \gamma S) = \dim_{\mathbb{F}_q}(Y \cap \gamma Y) = tl \). Furthermore, \(\forall  \gamma \in \mathbb{F}_{q^n} \setminus \mathbb{F}_{q^t},\  S \cap \gamma S = Y \cap \gamma Y \).
	
	In particular, for any nonzero \( \mu \in \overline{S} \), we have \( Y \cap \mu Y = S \cap \mu S\) since \( \overline{S} \cap \mathbb{F}_{q^t} = \{0\} \), and hence we have
	\(
	\mu \mathbb{F}_{q^t} \subseteq Y \cap \mu Y = S \cap \mu S,
	\)
	which forces \( \mu \mathbb{F}_{q^t} \subseteq \mu S \), i.e., \( \mathbb{F}_{q^t} \subseteq S \). Note that \( \lambda^{t-1} \in \mathbb{F}_{q^t} \) and \( \lambda^{t-1} \notin \overline{S} \cup S_m \). Thus, there exists a  polynomial \( p(\lambda) \ne 0 \in S_m \) and \( s_1 \ne 0\in \overline{S} \) such that
	\(
	\lambda^{t-1} = p(\lambda) + s_1.
	\)
	This gives
	\[
	s_1 = \lambda^{t-1} - p(\lambda) \in \overline{S} \cap \mathbb{F}_{q^t},
	\]
	which is a contradiction. Therefore, no such \( r \)-FWS code can exist.
	
{\bf Case $2$: $\mathbb{F}_{q^t}\subsetneq H(Y)$ and $Y\ne \mathbb{F}_{q^n}$.} Note that $\mathbb{F}_{q^t}$ is a subfield of $H(Y)$, which implies that $H(Y)$ is a vector space over $\mathbb{F}_{q^t}$. Let \( s = \dim_{\mathbb{F}_{q^t}}(H(Y)) \geq 2 \) and \( j = \frac{l+1}{s} \). If there exists \( \beta \) such that
	\[
	\dim_{\mathbb{F}_q}(S \cap \beta S) = st \cdot (j - 1) + 1,
	\]
	then \(
	\dim_{\mathbb{F}_q}(Y \cap \beta Y) = t(l + 1).
	\)
	By \cite[Lemma 6.3 ($4$)]{jcta}, we have
	\[
	\begin{aligned}
		st \cdot (j - 1) + 1 &\geq 2m - 2t + \dim_{\mathbb{F}_q}(Y \cap \beta Y) \\
		&\geq 2 - 2t + st \cdot j \\
		&\geq st \cdot (j - 1) + 2,
	\end{aligned}
	\]
	which is impossible. Consequently, \( n_{st(j-1)+1} = 0 \) and \( r \geq st \cdot (j - 1) + 2 \), which implies that for all \( \mu \in \mathbb{F}_{q^n}^* \),
	\[
	\dim_{\mathbb{F}_q}(S \cap \mu S) \geq st \cdot (j - 1) + 2.
	\]
	Since $Y\cap \mu Y$ is a vector space over $\mathbb{F}_{q^t}$, we obtain that
	\(
	\forall \mu \in \mathbb{F}_{q^n}^*, \  \dim_{\mathbb{F}_q}(Y \cap \mu Y) = st \cdot j = t(l + 1),
	\)
	i.e., \( H(Y) = \mathbb{F}_{q^n} \).
	
{\bf Case $3$: $r=2m+t(l-1)$, $2m\ge t-1$ and $Y=\mathbb{F}_{q^n}$.} 

{\bf Case 3.1: \(2m < t\).} Then we have \(\dim_{\mathbb{F}_q}(\overline{S} \cap \beta\overline{S}) \leq t(l - 1)\) for all \(\beta \in \mathbb{F}_{q^n} \setminus H(\overline{S})\) since \(H(\overline{S})\ne \mathbb{F}_{q^n}\). Thus
		\[
		\dim_{\mathbb{F}_q}(\overline{S} + \beta \overline{S}) = 2k - \dim_{\mathbb{F}_q}(\overline{S} \cap \beta \overline{S})\ge 2tl - t(l-1)=n,\]
		which implies $\overline{S}+\beta \overline S=S+\beta S = \mathbb{F}_{q^n}$ and 
		\[
		\dim_{\mathbb{F}_{q^n}}(S \cap \beta S) = 2k - n =  tl + 2m - t .
		\]
		If $\beta \in H(\overline{S})$, then we have
		\[
		\dim_{\mathbb{F}_q}(S\cap \beta S)\ge \dim_{\mathbb{F}_q}(\overline{S} \cap \beta \overline{S}) = tl.
		\]
		If $2m<t-1$ and $r=tl+2m-t$, then we have $n_{tl-1}=0$ and $n_{tl+2m-t} > 0$, in which case there does not exist $r$-FWS code.
		If $2m=t-1$ and $r=tl+2m-t$, then by Lemma $\ref{lemcom}$, for \(i \in \{0,1,\dots,m\}\) we have
		\[
		\dim_{\mathbb{F}_{q^n}}(S \cap \lambda^i S) = tl+m-i,
		\]
		which implies $n_{tl} ,n_{tl+1},\dots,n_{k}$ are all positive.
		
{\bf Case 3.2:  \(2m = t\).} By \cite[Lemma 6.3 ($4$)]{jcta}, we obtain that
		\begin{equation}\label{eq}
			\forall \beta \in \mathbb{F}_{q^n}^*, \quad \dim_{\mathbb{F}_q}(S \cap \beta S) \geq 2m - 2t + t(l + 1) = tl. 
		\end{equation}
		
		By Lemma $\ref{lemcom}$, we have  \[
		\dim_{\mathbb{F}_q}(S \cap \lambda^i S) = tl + i,
		\] which implies \(n_{tl+i} > 0\) for \(i \in \{0,1, \ldots, \frac{t}{2}\}\). Since \(H(\overline{S}) \ne \mathbb{F}_{q^n}^*\), there exists \(\gamma \notin H(\overline{S})\) such that
		\[
		\dim_{\mathbb{F}_q}(\overline{S} \cap \gamma\overline{S}) \leq t(l - 1).
		\]
		By Lemma $\ref{lem6.3.1}$, we have
		\(
		\dim_{\mathbb{F}_q}(S \cap \gamma S) \leq tl.
		\)
		Combining Eq. $(\ref{eq})$, we have \(\dim_{\mathbb{F}_q}(S \cap \gamma S) = tl\). Hence, \(n_{tl} > 0\), confirming that this is a \(tl\)-FWS code.
		
{\bf Case 3.3:  \(2m > t\).} By \cite[Lemma 6.3 ($4$)]{jcta}, we obtain that
		\[
		\dim_{\mathbb{F}_q}(S \cap \beta S) \geq 2m - 2t + t(l + 1) = tl = t(l - 1) + 2m.
		\]
		
		One can verify that
		\[
		\dim_{\mathbb{F}_q}(S \cap \lambda^i S) = tl + m - i
		\]for \(S \cap \lambda^i S\) with \(i \in \{0,1, 2, \ldots, t - m\}\) and \(\beta \in \mathbb{F}_{q^t}\) by Lemma $\ref{lemcom}$.
		Consequently, $n_{tl+i}$ with $i\in\{2m-t, 2m-t+1,...,m\}$ are all positive. This confirms a \(2m + t(l-1)\)-FWS code.

{\bf Case $4$: $m=t-1$ and $H(Y)=\mathbb{F}_{q^t}$.} If $t=2$ and $m=1$, this has been proved to be a $0$-FWS code in \cite{jcta}. If $t=3$ and $m=2$, we have $n_{k-2}=0$ by Lemma $\ref{th6.4}$ $(4)$, which implies $r\ge k-1$.
	According to \cite[Lemma 6.3 ($4$)]{jcta}, we have
$$\dim_{\mathbb{F}_q}(S \cap \mu S) \geq 2m - 2t + \dim_{\mathbb{F}_q}(Y \cap \mu Y) = \dim_{\mathbb{F}_q}(Y \cap \mu Y) - 2 $$
	for any $\mu \in \mathbb{F}_{q^n}^{\ast}$.
	If there exists $\beta \neq 0$ such that $\dim_{\mathbb{F}_q}(S \cap \beta S) = k - 2$ and $t > 3$, then
$\dim_{\mathbb{F}_q}(Y \cap \beta Y) = t(l + 1).$
	Thus, we have
	$$k - 2 \geq t(l + 1) - 2 = k - 1.$$
	
	This is impossible. Therefore, $n_{k-2} = 0$ and $r \ge k - 1$.
	
	If $t\ge 3$, we have shown that for any $\mu \in \mathbb{F}_{q^n}^{\ast}$, $\dim_{\mathbb{F}_q}(S \cap \mu S) \ge r \ge k - 1$. Since $H(Y) = \mathbb{F}_{q^t}$, there exists $\beta \notin H(Y)$ such that
$$\dim_{\mathbb{F}_q}(Y \cap \beta Y) \leq tl.$$
	
	Moreover, we have
	\begin{equation}
		\dim_{\mathbb{F}_q}(S \cap \beta S) \leq \dim_{\mathbb{F}_q}(Y \cap \beta Y) \leq tl.
	\end{equation}

This completes the proof.
\end{proof}

\begin{remark}
	Indeed, the $(tl-1)$-FWS codes and the $tl$-FWS codes are the same as the $(2m + t(l-1))$-FWS codes when $2m=t-1$ and $2m=t$, respectively.
\end{remark}
The following lemma presents a necessary condition for the existence of the second family of $r$-FWS codes in Theorem $\ref{main result3}$, which serves as a valuable tool for determining the weight distribution of such $r$-FWS codes:
\begin{lemma}
		If there exists an $r$-FWS code, then $H(\overline{S}) = \mathbb{F}_{q^t}$.
\end{lemma}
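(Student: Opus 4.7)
The plan is to reduce the statement to a short divisibility argument using the necessary conditions established in Theorem \ref{main result3}(2). Since the lemma concerns an $r$-FWS code of the second family in Lemma \ref{classification}, I would start by recalling from Theorem \ref{main result3}(2) that such a code must satisfy $Y = \mathbb{F}_{q^n}$, so that $n = \dim_{\mathbb{F}_q} Y = t(l+1)$. This pins down $n$ in terms of $t$ and $l$ and will be the crux of the argument.

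Next, I would observe that $\mathbb{F}_{q^t} \subseteq H(\overline{S})$ automatically, because $\overline{S}$ is by definition an $\mathbb{F}_{q^t}$-subspace of positive dimension $l$; hence every nonzero element of $\mathbb{F}_{q^t}$ stabilizes $\overline{S}$ multiplicatively. Since $H(\overline{S})$ is itself a subfield of $\mathbb{F}_{q^n}$ containing $\mathbb{F}_{q^t}$, I can write $H(\overline{S}) = \mathbb{F}_{q^{t\tau}}$ for a unique positive integer $\tau$, and the inclusion $H(\overline{S}) \subseteq \mathbb{F}_{q^n} = \mathbb{F}_{q^{t(l+1)}}$ forces $t\tau \mid t(l+1)$, that is, $\tau \mid (l+1)$.

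On the other hand, $\overline{S}$ is naturally an $H(\overline{S})$-vector space, so its $\mathbb{F}_q$-dimension $tl$ must be divisible by $[H(\overline{S}) : \mathbb{F}_q] = t\tau$; this gives $\tau \mid l$. Combining the two divisibility conditions, $\tau \mid \gcd(l, l+1) = 1$, so $\tau = 1$ and therefore $H(\overline{S}) = \mathbb{F}_{q^t}$, as desired.

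There is no real obstacle here once one has Theorem \ref{main result3}(2) in hand; the argument is essentially a one-line divisibility observation. The only subtle point is making sure one invokes the correct piece of Theorem \ref{main result3} to pin down $n = t(l+1)$, since without this equality the two divisibility conditions ($\tau \mid l$ and $\tau \mid (l+1)$) do not both materialize and the conclusion cannot be forced.
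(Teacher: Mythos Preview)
Your argument is correct and takes a genuinely different route from the paper. The paper writes $\overline{S}$ in the explicit form $\alpha\langle 1,\alpha,\ldots,\alpha^{l-1}\rangle_{\mathbb{F}_{q^t}}$ for a generator $\alpha$ of $\mathbb{F}_{q^n}$ over $\mathbb{F}_{q^t}$, decomposes a hypothetical $\beta\in H(\overline{S})\setminus\mathbb{F}_{q^t}$ as $s_1+s_2$ via $\mathbb{F}_{q^n}=\overline{S}\oplus\mathbb{F}_{q^t}$, and reaches a contradiction by a leading-coefficient computation: if $s_1=a_1\alpha+\cdots+a_i\alpha^i$ with $a_i\neq0$, then $\beta\cdot\alpha^{l-i}$ carries a nonzero $\alpha^l$-term and so cannot lie in $\langle 1,\alpha,\ldots,\alpha^{l-1}\rangle_{\mathbb{F}_{q^t}}$. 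Your divisibility argument is shorter and more robust: once $Y=\mathbb{F}_{q^n}$ yields $n=t(l+1)$, the two constraints $t\tau\mid tl$ (from $\overline{S}$ being $H(\overline{S})$-linear) and $t\tau\mid t(l+1)$ (from $H(\overline{S})\subseteq\mathbb{F}_{q^n}$) immediately force $\tau=1$, with no need to pin down any particular basis for $\overline{S}$. The paper's approach, by contrast, relies on that polynomial-basis representation of $\overline{S}$, which is an additional structural choice not obviously implied by the hypotheses of Lemma~\ref{classification}.
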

\begin{proof}For any $\beta \in \mathbb{F}_{q^n}$, we have $\beta=s_1+s_2$ with $s_1 \in \overline{S}$, $s_2 \in \mathbb{F}_{q^t}$ since $\mathbb{F}_{q^n}=Y=\overline{S}\oplus \mathbb{F}_{q^t}$ and \[\overline{S} = \left< \alpha, \alpha, \ldots, \alpha^l \right>_{\mathbb{F}_{q^t}} = \alpha \left< 1, \alpha, \ldots, \alpha^{l-1} \right>_{\mathbb{F}_{q^t}},\] where $\alpha$ is the element such that $\mathbb{F}_{q^t}(\alpha) = \mathbb{F}_{q^n}$.
		If there exists $\beta=s_1+s_2 \in \mathbb{F}_{q^n} \setminus \mathbb{F}_{q^t}$ such that $\beta \in H(\overline{S})$, then
	\[
(s_1 + s_2)\overline{S} = \overline{S},
	\]
	where $s_1=a_1\alpha +\cdots+a_i\alpha^i \ne 0,s_2\ne0$ with $a_i\ne 0$ and $i\in \{1,\cdots,l\}$.
Furthermore, we have
		\[
		(s_1 + s_2)\left< 1, \alpha, \ldots, \alpha^{l-1} \right>_{\mathbb{F}_{q^t}} = \left< 1, \alpha, \ldots, \alpha^{l-1} \right>_{\mathbb{F}_{q^t}},\]
which implies $\alpha^{l-i}(s_1 + s_2) \in \left< 1, \alpha, \ldots, \alpha^{l-1} \right>_{\mathbb{F}_{q^t}}$. However, the former contains $\alpha^l$ while the latter does not. Therefore, this is a contradiction and $s_1 = 0$. We have proved that $\beta = s_2 \in \mathbb{F}_{q^t}$.
\end{proof}

In fact, the second family of $r$-FWS codes outlined in Theorem $\ref{main result3}$ can be classified into three distinct cases based on the relationship between $r, t, l$ and $m$:
	\begin{itemize}
		\item[$(1)$]$r=tl-1$ when $2m<t$,
		\item[$(2)$]$r=tl$ when $2m=t$, and
		\item[$(3)$]$r=t(l-1)+2m$ when $2m>t$.
	\end{itemize}
	By analyzing these three cases, we derive the following theorem regarding the weight distribution of such $r$-FWS codes:
\begin{theorem}\label{sta}If $\mathcal{C}=\text{Orb}(S)$ is an $r$-FWS code and $\mu \notin H(\overline{S})$, where $S$ is under the condition of Family $(2)$ in Theorem $\ref{main result3}$, then $\text{dim}_{\mathbb{F}_q}(S\cap \mu S)=t(l-1)+2m$.
\end{theorem}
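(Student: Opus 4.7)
The plan is to obtain matching upper and lower bounds on $\dim_{\mathbb{F}_q}(S\cap \mu S)$ for $\mu \notin H(\overline{S})$ and show both equal $t(l-1)+2m$. I will use two structural facts that are already in place: the preceding lemma guarantees $H(\overline{S})=\mathbb{F}_{q^t}$, and the hypothesis that $\mathcal{C}$ is an $r$-FWS code of Family (2) in Theorem \ref{main result3} forces $Y=\mathbb{F}_{q^n}$, so in particular $n=t(l+1)$.

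For the upper bound, I would first observe that $\overline{S}$ is an $\mathbb{F}_{q^t}$-subspace of dimension $l$, so $\dim_{\mathbb{F}_{q^t}}(\overline{S}\cap \mu\overline{S})\le l$, with equality exactly when $\mu\overline{S}=\overline{S}$, i.e.\ when $\mu\in H(\overline{S})$. Since $\mu\notin H(\overline{S})$, we have $\dim_{\mathbb{F}_{q^t}}(\overline{S}\cap \mu\overline{S})\le l-1$. Feeding this into Lemma \ref{lem6.3.1} yields
\[
\dim_{\mathbb{F}_q}(S\cap \mu S)\le 2m+t(l-1).
\]

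For the lower bound, I would invoke \cite[Lemma 6.3 (4)]{jcta}, exactly as was done in Cases 3.2 and 3.3 of the proof of Theorem \ref{main result3}. Because $Y=\mathbb{F}_{q^n}$, we have $\mu Y=Y$, so $\dim_{\mathbb{F}_q}(Y\cap \mu Y)=n=t(l+1)$, and the cited lemma gives
\[
\dim_{\mathbb{F}_q}(S\cap \mu S)\ge 2m-2t+\dim_{\mathbb{F}_q}(Y\cap \mu Y)=2m-2t+t(l+1)=2m+t(l-1).
\]
Combining the two bounds yields $\dim_{\mathbb{F}_q}(S\cap \mu S)=t(l-1)+2m$, as required.

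The argument is short because the heavy lifting (the sandwiching inequalities from Lemma \ref{lem6.3.1} and \cite[Lemma 6.3 (4)]{jcta}, together with the structural constraints $H(\overline{S})=\mathbb{F}_{q^t}$ and $Y=\mathbb{F}_{q^n}$) has already been established. The only real subtlety is verifying that the equality case in the upper bound requires precisely $\mu\in H(\overline{S})$, which is a standard finite-dimensional fact: if $\mu\overline{S}\subseteq\overline{S}$ with equal $\mathbb{F}_{q^t}$-dimensions then $\mu\overline{S}=\overline{S}$, hence $\mu$ lies in the stabilizer. With that in hand there is no obstacle.
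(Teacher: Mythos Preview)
Your proposal is correct and follows essentially the same approach as the paper: both arguments sandwich $\dim_{\mathbb{F}_q}(S\cap\mu S)$ between the upper bound from Lemma~\ref{lem6.3.1} (using $\dim_{\mathbb{F}_{q^t}}(\overline{S}\cap\mu\overline{S})\le l-1$ when $\mu\notin H(\overline{S})=\mathbb{F}_{q^t}$) and the lower bound from \cite[Lemma~6.3~(4)]{jcta} (using $Y=\mathbb{F}_{q^n}$ so that $\dim_{\mathbb{F}_q}(Y\cap\mu Y)=t(l+1)$). The paper's proof is just a one-line compression of the same chain of inequalities.
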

\begin{proof}If $\mu \notin H(\overline{S}) = \mathbb{F}_{q^t}$, then we have 
		$
		\dim_{\mathbb{F}_q}(\overline{S} \cap \mu \overline{S}) \leq t(l - 1)$
		and
		\[
		t(l - 1) + 2m \geq \dim_{\mathbb{F}_q}(\overline{S} \cap \mu \overline{S}) + 2m \geq \dim_{\mathbb{F}_q}(S \cap \mu S) \geq 2m - 2t + \dim_{\mathbb{F}_q}(Y \cap \mu Y) = t(l - 1) + 2m.
		\]
		by \cite[Lemma 6.3 ($4$)]{jcta}. Thus, if $\mu \notin H(\overline{S}) = \mathbb{F}_{q^t}$, we have 
		\[
		\dim_{\mathbb{F}_q}(S \cap \mu S) = t(l - 1) + 2m.
		\]This completes the proof.
\end{proof}

\begin{theorem}\label{thm2}
In the case of $r=tl-1$, the weight distributions of $\mathcal{C}$ are
		\[
		\omega_{2i}(\mathcal{C}) = 
		\begin{cases} 
			(q + 1)q^{2i-1}, & \text{if } i \in \{1, \ldots, m - 1\}, \\
			{}&{}\\
			\cfrac{q^t - q^{2m-1}}{q - 1}, & \text{if } i = m, \\
			{}&{}\\
			\cfrac{q^n - q^t}{q - 1}, & \text{if } i = m + 1.
		\end{cases}
		\]
\end{theorem}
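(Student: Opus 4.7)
The plan is to partition $\mathbb{F}_{q^n}^*$ into the ``far'' part $\mathbb{F}_{q^n}^*\setminus\mathbb{F}_{q^t}$ and the ``near'' part $\mathbb{F}_{q^t}^*$, and to treat each using tools already prepared in the paper. Throughout, I work in the setting of Theorem~\ref{main result3}(2) with $r=tl-1$, which forces $2m=t-1$ (equivalently $t=2m+1$), and I normalize to $b=1$ as in the proof of that theorem.

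For the far part, Theorem~\ref{sta} immediately gives $\dim_{\mathbb{F}_q}(S\cap\mu S)=t(l-1)+2m=tl-1$ for every $\mu\in\mathbb{F}_{q^n}^*\setminus\mathbb{F}_{q^t}$, which translates to $d(S,\mu S)=2(m+1)$. So the entire contribution of these $q^n-q^t$ elements feeds into $\omega_{2(m+1)}$. To turn the $\mu$-count into a count of distinct subspaces $\mu S$, the stabilizer $H(S)$ must be pinned down. Lemma~\ref{congruennce} does exactly this: since the code is $(tl-1)$-FWS, both $\omega_{2m}$ and $\omega_{2(m+1)}$ are positive, so $k\equiv m$ and $k\equiv m+1\pmod{[H(S):\mathbb{F}_q]}$, forcing $H(S)=\mathbb{F}_q$. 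Dividing by $q-1$ then gives $\omega_{2(m+1)}=(q^n-q^t)/(q-1)$.

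For the near part $\mu\in\mathbb{F}_{q^t}^*$, Lemma~\ref{lemcom} splits the intersection as $S\cap\mu S=\overline{S}\oplus(S_m\cap\mu S_m)$ with $S_m=\langle 1,\lambda,\ldots,\lambda^{m-1}\rangle_{\mathbb{F}_q}$, and hence $d(S,\mu S)=d(S_m,\mu S_m)$, an identity living entirely in $\mathbb{F}_{q^t}=\mathbb{F}_q(\lambda)$. Since $m=(t-1)/2<t/2$, the orbit code $\text{Orb}(S_m)\subseteq\mathcal{G}_q(t,m)$ falls in the first branch of Lemma~\ref{thjcta}, whose weight distribution already matches the formulas claimed for $j\in\{1,\ldots,m-1\}$ and $j=m$. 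A short argument (again via Lemma~\ref{congruennce} applied to $\text{Orb}(S_m)$) gives $H(S_m)=\mathbb{F}_q$, so that $\mathbb{F}_{q^t}^*/\mathbb{F}_q^*\to\text{Orb}(S_m)$ sending $\mu\mapsto\mu S_m$ is a bijection; combined with $H(S)\cap\mathbb{F}_{q^t}^*=\mathbb{F}_q^*$, it follows that the near contribution to each $\omega_{2j}$ equals $\omega_{2j}(\text{Orb}(S_m))$, giving exactly the claimed formulas for $j\in\{1,\ldots,m\}$.

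Combining the two parts yields the theorem, and as a consistency check the sum $\sum_{j=1}^{m-1}(q+1)q^{2j-1}+(q^t-q^{2m-1})/(q-1)+(q^n-q^t)/(q-1)$ telescopes to $(q^n-q)/(q-1)=|\text{Orb}(S)|-1$, in line with $H(S)=\mathbb{F}_q$. The step I expect to be the main obstacle is the stabilizer bookkeeping: the clean transfer of weights from $\text{Orb}(S_m)$ to the near part of $\text{Orb}(S)$ requires that distinct cosets $\mu\mathbb{F}_q^*\subseteq\mathbb{F}_{q^t}^*$ yield distinct subspaces both as $\mu S_m$ and as $\mu S$, and this relies squarely on the $H(S)=\mathbb{F}_q$ computation from paragraph two.
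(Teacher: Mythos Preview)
Your proof is correct and follows essentially the same approach as the paper: both split $\mathbb{F}_{q^n}^*$ according to whether $\mu$ lies in $\mathbb{F}_{q^t}$ (the paper writes this as $H(\overline{S})$, using the preceding lemma), apply Theorem~\ref{sta} on the complement to obtain $\omega_{2(m+1)}$, and reduce the $\mathbb{F}_{q^t}^*$ part to the weight distribution of $\text{Orb}(S_m)$ via Lemma~\ref{lemcom} and Lemma~\ref{thjcta}. Your treatment of the stabilizer bookkeeping via Lemma~\ref{congruennce} is actually more explicit than the paper's, which simply divides by $q-1$ without comment.
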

\begin{proof}
If $\mu \in H(\overline{S})$, then we have 
		$
		\dim_{\mathbb{F}_q}(S \cap \mu S) \geq \dim_{\mathbb{F}_q}(\overline{S} \cap \mu \overline{S}) = tl.
		$
		If $\mu \notin H(\overline{S})$, then by Theorem $\ref{sta}$, 
		\[
		\dim_{\mathbb{F}_q}(S \cap \mu S) = t(l - 1) + 2m = tl - 1,
		\]
		which implies $
		\dim_{\mathbb{F}_q}(S \cap \mu S) = tl - 1 \Leftrightarrow \mu \notin H(\overline{S}).
		$
		Consequently, we obtain 
		\[
		\omega_{2m+2} = \frac{1}{q - 1}|\{\mu \in \mathbb{F}_{q^n} \setminus H(\overline{S})\}| = \cfrac{q^n - q^t}{q - 1}.
		\]
		According to Lemma $\ref{lemcom}$, we only need to consider $\dim_{\mathbb{F}_q}(S_m \cap \mu S_m)$, which has been addressed in Lemma $\ref{thjcta}$ and its proof. Thus, we have
		\[
		|\{\mu \in \mathbb{F}_{q^n} \cap H(\overline{S}) : \dim_{\mathbb{F}_q}(S_m \cap \mu S_m) = m - i\}| =
		\begin{cases}
			(q^2 - 1)q^{2i-1} ,& \text{if } i \in \{1, \ldots, m - 1\}, \\
			&\\
			q^t - q^{2m-1} ,& \text{if } i = m.
		\end{cases}
		\]
		Hence we obtain that
		\begin{align*}
			\omega_{2i}(\mathcal{C})&	= \cfrac{1}{q - 1}|\{\mu \in \mathbb{F}_{q^n} \cap H(\overline{S}) : \dim_{\mathbb{F}_q}(S \cap \mu S) = tl + m - i\}| \\
			&=\begin{cases}
				(q + 1)q^{2i-1}, & \text{if } i \in \{1, \ldots, m - 1\}, \\
				&\\
				\cfrac{q^t - q^{2m-1}}{q - 1}, & \text{if } i = m.
			\end{cases}
		\end{align*}
	
		This completes the proof.
\end{proof}

\begin{theorem}	\label{thm3}
In the case of $r = tl$, the weight distribution of $\mathcal{C}$ is
		\[
		\omega_{2i}(\mathcal{C}) = 
		\begin{cases} 
			(q + 1)q^{2i-1}, & \text{if } i \in \{1, \ldots, m - 1\}, \\
			{}&{}\\
			\cfrac{q^n - q^{2m-1}}{q - 1}, & \text{if } i = m.
		\end{cases}
		\]
\end{theorem}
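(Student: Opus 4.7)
The plan is to follow the same strategy as the proof of Theorem~\ref{thm2}, adapted to the case $2m = t$. I would partition $\mathbb{F}_{q^n}^*$ according to membership in $H(\overline{S})$ and assemble the contributions. By the preceding lemma $H(\overline{S}) = \mathbb{F}_{q^t}$, and since $\omega_2(\mathcal{C}) > 0$, Lemma~\ref{congruennce} forces $H(S) = \mathbb{F}_q$; consequently $\omega_{2i}(\mathcal{C})$ equals the number of relevant $\mu \in \mathbb{F}_{q^n}^*$ divided by $q-1$.

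For $\mu \in \mathbb{F}_{q^n}^* \setminus \mathbb{F}_{q^t}$, Theorem~\ref{sta} immediately gives
\[
\dim_{\mathbb{F}_q}(S \cap \mu S) = t(l-1) + 2m = tl,
\]
using $2m = t$, so $d(S, \mu S) = 2(k-tl) = 2m$. Thus these $\mu$ feed exclusively into $\omega_{2m}(\mathcal{C})$, contributing $\frac{q^n - q^t}{q-1}$ after normalization and nothing to the other weights.

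For $\mu \in \mathbb{F}_{q^t}^*$, Lemma~\ref{lemcom} yields $S \cap \mu S = \overline{S} \oplus (S_m \cap \mu S_m)$ with $S_m = \langle 1, \lambda, \ldots, \lambda^{m-1} \rangle_{\mathbb{F}_q}$, hence $d(S,\mu S) = 2\bigl(m - \dim_{\mathbb{F}_q}(S_m \cap \mu S_m)\bigr)$. Since $m = t/2$, the code $\operatorname{Orb}(S_m) \subseteq \mathcal{G}_q(t, m)$ is a FWS code of type (1) in Theorem~\ref{Th1.2} and falls squarely into the branch $k \leq t/2$ of Lemma~\ref{thjcta}; moreover a direct check shows $H(S_m) = \mathbb{F}_q$, so each orbit of $\operatorname{Orb}(S_m)$ corresponds to exactly $q-1$ scalars $\mu \in \mathbb{F}_{q^t}^*$. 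Applying Lemma~\ref{thjcta} and converting orbit counts to element counts, exactly as in the proof of Theorem~\ref{thm2}, yields
\[
\bigl|\{\mu \in \mathbb{F}_{q^t}^* : \dim_{\mathbb{F}_q}(S_m \cap \mu S_m) = m-i\}\bigr| = \begin{cases} (q^2-1)q^{2i-1}, & i \in \{1,\ldots,m-1\}, \\ q^t - q^{2m-1}, & i = m, \end{cases}
\]
and dividing by $q-1$ gives the $\mathbb{F}_{q^t}^*$-contribution to $\omega_{2i}(\mathcal{C})$.

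Combining the two regimes yields the claim: for $i \in \{1,\ldots,m-1\}$ only $\mu \in \mathbb{F}_{q^t}^*$ contributes, giving $(q+1)q^{2i-1}$; for $i = m$ the two parts sum to $\frac{q^t - q^{2m-1}}{q-1} + \frac{q^n - q^t}{q-1} = \frac{q^n - q^{2m-1}}{q-1}$. I do not expect a substantive obstacle, since the argument is a clean specialization of the Theorem~\ref{thm2} template; the only conceptual difference is that when $2m = t$ the contributions from $\mu \in \mathbb{F}_{q^t}^*$ with $S_m \cap \mu S_m = \{0\}$ and from $\mu \notin \mathbb{F}_{q^t}$ merge into the single maximal weight $\omega_{2m}(\mathcal{C})$, rather than populating two distinct weights as occurs in the $2m < t$ case.
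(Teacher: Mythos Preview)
Your proposal is correct and follows essentially the same approach as the paper's own proof: partition $\mathbb{F}_{q^n}^*$ into $\mathbb{F}_{q^t}^*$ and its complement, apply Theorem~\ref{sta} on the complement and Lemma~\ref{lemcom} together with Lemma~\ref{thjcta} on $\mathbb{F}_{q^t}^*$, then merge the two contributions at $i=m$. Your write-up is in fact slightly more careful than the paper's, since you explicitly justify $H(S)=\mathbb{F}_q$ and $H(S_m)=\mathbb{F}_q$ before converting orbit counts to element counts.
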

\begin{proof}If $\mu \notin H(\overline{S})$, then by Theorem $\ref{sta}$, we have 
		\[
		\dim_{\mathbb{F}_q}(S \cap \mu S) = t(l - 1) + 2m = tl.
		\]
		For $\mu \in H(\overline{S})$, similar to Theorem $\ref{thm2}$, we have
		\[
		|\{\mu \in \mathbb{F}_{q^n} \cap H(\overline{S}) : \dim_{\mathbb{F}_q}(S_m \cap \mu S_m) = m - i\}| =
		\begin{cases}
			(q^2 - 1)q^{2i-1}, & \text{if } i \in \{1, \ldots, m - 1\}, \\
			&\\
			q^t - q^{2m-1}, & \text{if } i = m.
		\end{cases}
		\]
		Hence,
		\[
		\omega_{2m} = \cfrac{q^n - q^t}{q - 1} + \cfrac{q^t - q^{2m-1}}{q - 1} = \frac{q^n - q^{2m-1}}{q - 1}.
		\]This completes the proof.
\end{proof}
\begin{theorem}\label{thm4} In the case of $r = 2m + t(l - 1)$, the weight distribution of $\mathcal{C}$ is
		\[
		\omega_{2i}(\mathcal{C}) = 
		\begin{cases} 
			(q + 1)q^{2i-1}, & \text{if } i \in \{1, \ldots, t - m - 1\}, \\
			{}&{}\\
			\cfrac{q^n - q^{2(t-m)-1}}{q - 1} ,& \text{if } i = t - m.
		\end{cases}
		\]
\end{theorem}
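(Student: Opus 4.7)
The plan is to mirror the approach used in Theorems \ref{thm2} and \ref{thm3}: partition $\mathbb{F}_{q^n}^*$ into $\mathbb{F}_{q^t}^*$ and its complement (recall $H(\overline{S}) = \mathbb{F}_{q^t}$), count the elements in each part that contribute to each weight class, and finally divide by $|H(S)^*| = q - 1$.

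For $\mu \in \mathbb{F}_{q^n}^* \setminus \mathbb{F}_{q^t}$, Theorem \ref{sta} immediately gives $\dim_{\mathbb{F}_q}(S \cap \mu S) = t(l-1) + 2m$, so $d(S, \mu S) = 2k - 2(t(l-1) + 2m) = 2(t-m)$. This accounts for $q^n - q^t$ values of $\mu$ and therefore contributes $\frac{q^n - q^t}{q - 1}$ to $\omega_{2(t-m)}(\mathcal{C})$. For $\mu \in \mathbb{F}_{q^t}^*$, Lemma \ref{lemcom} reduces the problem to computing $\dim_{\mathbb{F}_q}(S_m \cap \mu S_m)$ with $S_m = \langle 1, \lambda, \ldots, \lambda^{m-1}\rangle_{\mathbb{F}_q}$, so that $d(S, \mu S) = 2(m - \dim_{\mathbb{F}_q}(S_m \cap \mu S_m))$. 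The orbit $\{\mu S_m : \mu \in \mathbb{F}_{q^t}^*\}$ lives in $\mathcal{G}_q(t, m)$, and the hypothesis $2m > t$ is precisely the regime $m > t/2$ of Lemma \ref{thjcta}(2) with ambient degree $t$ in place of $n$. A short check yields $H(S_m) = \mathbb{F}_q$: every $\mu \in \mathbb{F}_{q^t}$ already stabilizes $\overline{S}$, so $\mu S_m = S_m$ forces $\mu S = S$, hence $\mu \in H(S) \cap \mathbb{F}_{q^t} = \mathbb{F}_q$. Consequently the orbit counts of Lemma \ref{thjcta}(2) multiply by $q - 1$ to give exactly $(q^2 - 1)q^{2i-1}$ elements $\mu \in \mathbb{F}_{q^t}^*$ with $d(S_m, \mu S_m) = 2i$ for $i \in \{1, \ldots, t - m - 1\}$, exactly $q^t - q^{2(t-m)-1}$ for $i = t - m$, and none for the remaining weight classes.

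Combining the two sources and dividing by $q - 1$ then produces $\omega_{2i}(\mathcal{C}) = (q + 1)q^{2i-1}$ for $i \in \{1, \ldots, t - m - 1\}$ and $\omega_{2(t-m)}(\mathcal{C}) = \frac{(q^t - q^{2(t-m)-1}) + (q^n - q^t)}{q - 1} = \frac{q^n - q^{2(t-m)-1}}{q - 1}$, matching the stated distribution. The main subtlety I foresee is the bookkeeping that reinterprets Lemma \ref{thjcta}(2) in the restricted ambient $\mathbb{F}_{q^t}$; once $H(S_m) = \mathbb{F}_q$ is verified and the elementary bound $\dim_{\mathbb{F}_q}(S_m \cap \mu S_m) \geq 2m - t$ (from $S_m + \mu S_m \subseteq \mathbb{F}_{q^t}$) rules out any $\mu \in \mathbb{F}_{q^t}^*$ falling into intermediate weight classes $t - m < i \leq m$, the remaining computation is the same arithmetic consolidation already executed in Theorems \ref{thm2} and \ref{thm3}.
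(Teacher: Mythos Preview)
Your proposal is correct and follows essentially the same approach as the paper: partition according to whether $\mu \in H(\overline{S}) = \mathbb{F}_{q^t}$, invoke Theorem~\ref{sta} for $\mu \notin \mathbb{F}_{q^t}$, reduce to $S_m \cap \mu S_m$ via Lemma~\ref{lemcom} for $\mu \in \mathbb{F}_{q^t}^*$, and then apply the $m > t/2$ case of Lemma~\ref{thjcta} inside $\mathbb{F}_{q^t}$ before consolidating the two contributions to $\omega_{2(t-m)}$. Your write-up is in fact more explicit than the paper's, which simply says ``similar to Theorem~\ref{thm2}'' at the reduction step; your verification that $H(S_m) = \mathbb{F}_q$ and the dimension bound ruling out intermediate weight classes are details the paper leaves implicit.
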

\begin{proof}If $\mu \notin H(\overline{S})$, then by Theorem $\ref{sta}$, we have 
		\[
		\dim_{\mathbb{F}_q}(S \cap \mu S) = t(l - 1) + 2m.
		\]
		If $\mu \in H(\overline{S})$, we only need to consider $\dim_{\mathbb{F}_q}(S_m \cap \mu S_m)$.
		Similar to Theorem $\ref{thm2}$, we have
		\begin{align*}
			\omega_{2i}(\mathcal{C}) &=\cfrac{1}{q-1}|\{\mu \in \mathbb{F}_{q^n} \cap H(\overline{S}) : \dim_{\mathbb{F}_q}(S_m \cap \mu S_m) = m - i\}|\\
			&= 
			\begin{cases}
				(q + 1)q^{2i-1} ,& \text{if } i \in \{1, \ldots, t - m - 1\}, \\
				&\\
				\cfrac{q^t - q^{2(t-m)-1}}{q - 1} ,& \text{if } i = t - m.
			\end{cases}
		\end{align*}
		
		Hence,
		\[
		\omega_{2(t-m)} = \cfrac{q^n - q^t}{q - 1} + \cfrac{q^t - q^{2(t-m)-1}}{q - 1} = \cfrac{q^n - q^{2(t-m)-1}}{q - 1}.
		\]This completes the proof.
\end{proof}
The following examples present the weight distributions of the second family of $r$-FWS codes in Theorem $\ref{main result3}$:
\begin{example}Let $q=2,\ n=10,\ t=5$ and $m=2$, then $r=4$ and $l=1$, then $S=\overline{S}\oplus\left<1,\lambda\right>$ is a dimension $7$ vector space over $\mathbb{F}_{2}$, where $\mathbb{F}_2(\lambda)=\mathbb{F}_{2^5}$ and $\overline{S}$ is a dimension $1$ vector space over $\mathbb{F}_{2^5}.$ Since $r=tl-1$, according to Theorem \ref{thm2}, the weight distribution of $\mathcal{C}=\text{Orb}(S)$ is as follows:
$$\omega_0=1,~ \omega_2=6 ,~ \omega_4=24,~ \omega_6=992.$$
\end{example}

\begin{example}Let $q=2,\ n=16,\ t=4$ and $m=2$, then $r=12$ and $l=3$, then $S=\overline{S}\oplus\left<1,\lambda\right>$ is a dimension $14$ vector space over $\mathbb{F}_{2}$, where $\mathbb{F}_2(\lambda)=\mathbb{F}_{2^4}$ and $\overline{S}$ is a dimension $3$ vector space over $\mathbb{F}_{2^4}.$ Since $r=tl$, according to Theorem \ref{thm3}, the weight distribution of $\mathcal{C}=\text{Orb}(S)$ is as follows:
		\begin{equation*}
			\omega_0=1,~ \omega_2=6 ,~\omega_4=65528.
	\end{equation*}
\end{example}
\begin{example}Let $q=3,\ n=9,\ t=3$ and $m=2$, then $r=7$ and $l=2$, then $S=\overline{S}\oplus\left<1,\lambda\right>$ is a dimension $8$ vector space over $\mathbb{F}_{3}$, where $\mathbb{F}_3(\lambda)=\mathbb{F}_{3^3}$ and $\overline{S}$ is a dimension $2$ vector space over $\mathbb{F}_{3^3}.$ Since $r=t(l-1)+2m$, according to Theorem \ref{thm4}, the weight distribution of $\mathcal{C}=\text{Orb}(S)$ is as follows:
		\begin{equation*}
			\omega_0=1,~ \omega_2=9840.
	\end{equation*}
\end{example}

\section{Conclusion}\label{sec7}
In this paper, we obtained the following three main results:
\begin{itemize}
	\item[$(1)$]We presented the weight distribution of the second family of FWS one-orbit cyclic subspace codes in Theorem $\ref{thdistribution}$ mentioned in Theorem \ref{Th1.2} basing on counting the number of pairs $(p_1(x),p_2(x))$ under some conditions.
	\item[$(2)$]We investigated some equivalence classes of two families of FWS one-orbit cyclic subspace codes in Theorem $\ref{conclusion}$ mentioned in Theorem \ref{Th1.2} under the assumption that the map belongs to the normalizer of $\mathbb{F}_{q^n}^*$ in $GL_n(q)$. 
	\item[$(3)$]We classified the $r$-FWS one-orbit cyclic subspace codes in Theorem $\ref{main result3}$ relying on the classification of one-orbit cyclic subspace codes with minmum weight $2$ in Lemma $\ref{classification}$. We found that there are only two classes of $r$-FWS one-orbit cyclic subspace codes and proved that the non-existence of such codes in most cases.
\end{itemize} 

There is still an open problem that naturally arises in this paper: Is it possible to determine all the equivalence classes of FWS one-orbit cyclic subspace codes under the action of linear isometries in Theorem $\ref{Th1.2}$ and $\ref{main result3}$ ?

\end{document}